\pgfplotsset{compat=1.5}\usepackage{lmodern}
\newcommand{\sE}{\mathcal{E}}
\newcommand{\Bx}{\mathbf{x}}
\newcommand{\Ba}{\mathbf{a}}
\newcommand{\Bb}{\mathbf{b}}
\newcommand{\Bk}{\boldsymbol{\kappa}}
\DeclareMathOperator{\atanh}{atanh}
\newcommand{\fG}{\mathtt{G}}
\newcommand{\fm}{\mathsf{m}}
\newcommand{\fK}{\mathtt{K}}
\newcommand{\fg}{\mathtt{g}}
\newcommand{\fL}{\mathtt{L}}
\newcommand{\fp}{\mathtt{p}}
\newcommand{\R}{\mathds{R}}
\newcommand{\Z}{\mathds{Z}}
\newcommand{\abs}[1]{{\left|#1\right|}}
\newcommand{\norm}[1]{{\left\|#1\right\|}}
\newcommand{\tr}[2][]{\,\mathrm{tr}_{#1}\left[{#2}\right]}
\newcommand*\pFqskip{8mu}
\newcommand*\pFq{\begingroup
        \catcode`\,\active
        \def ,{\mskip\pFqskip\relax}%
        \dopFq
}
\def\dopFq#1#2#3#4#5{%
        {}_{#1}F_{#2}\biggl[\genfrac..{0pt}{}{#3}{#4};#5\biggr]%
        \endgroup
}
\theoremstyle{plain}
\newtheorem*{proposition}{Proposition}
\begin{document}
\title{One-loop diagrams in the Random Euclidean Matching Problem}
\author{Carlo Lucibello}\email{carlo.lucibello@polito.it}
\affiliation{Politecnico di Torino, Corso Duca degli Abruzzi, 24, I-10129 Torino, Italy, and Human Genetics Foundation --  Torino, Via Nizza 52, I-10126 Torino, Italy}
\author{Giorgio Parisi}
\affiliation{Dipartimento di Fisica, INFN -- Sezione di Roma1, CNR-IPCF UOS Roma Kerberos, Universit\`a ``Sapienza'', P.le A. Moro 2, I-00185, Rome, Italy}
\author{Gabriele Sicuro}\email{sicuro.gabriele@for.unipi.it}
\affiliation{Centro Brasileiro de Pesquisas F\'isicas, Rua Xavier Sigaud 150, 22290--180 Rio de Janeiro -- RJ, Brazil}
\begin{abstract}
The matching problem is a notorious combinatorial optimization problem that has attracted for many years the attention of the statistical physics community. 
Here we analyze the Euclidean version of the problem, i.e. the optimal matching problem between points randomly distributed on a $d$-dimensional Euclidean space, where the cost to minimize depends on the points' pairwise distances. Using  Mayer's cluster expansion we write a formal expression for the replicated action that is suitable for a saddle point computation. We  give the diagrammatic rules for each term of the expansion, and we analyze in detail  the one-loop diagrams. A characteristic feature of the theory, when diagrams are perturbatively computed   around the mean field part of the action, is the vanishing of the mass at zero momentum. 
In the non-Euclidean case of uncorrelated costs instead, we predict and numerically verify an anomalous scaling for the sub-sub-leading correction to the asymptotic average cost.
\end{abstract}
\maketitle
\section{Introduction}
Let us consider a complete graph $\fK_N$ of $N$ vertexes, $N$ even, indexed in $[N]\coloneqq \{i\}_{i=1,\dots,N}$, and a set of cost coefficients $w_{ij}=w_{ji}$, $1\leq  i < j\leq N$, in such a way that $w_{ij}$ is associated to the (undirected) edge $(i,j)$ of the graph. The matching problem consists in finding an optimal matching $\fm$ on the graph $\fK_N$. An optimal matching $\fm$ is a subset of edges of $\fK_N$ that satisfies two fundamental properties. First, $\fm$ must be a \textit{perfect} (or admissible) matching, i.e., each vertex of $\fK_N$ must be adjacent to one, and only one, edge in $\fm$. Second, in an optimal matching $\fm$, the sum of the costs of the edges in $\fm$, also called \textit{matching cost}, is minimal (optimality condition). We can associate an occupation number $m_{ij}\in \{1,0\}$ to each edge $(i,j)$ of the original complete graph, depending on whether it belongs to the matching $\fm$, or not. We identify then the matching $\fm$ with the symmetric matrix $\mathsf m\coloneqq(m_{ij})_{ij}$, that we denote, for the sake of simplicity, by the same symbol. By means of the matrix $\fm$, the matching cost can be written as
\begin{equation}\label{costo}
\sE_w(\mathsf m)\coloneqq\sum_{i<j}^N m_{ij} w_{ij}.
\end{equation}
We can recast the original optimal matching problem into the following integer programming problem for the matrix $\fm$:
 \begin{equation}
\underset{\mathsf m}{\mathrm{minimize}}\ \sE_w(\mathsf m),
 \end{equation}
given the constraints
\begin{subequations}\label{constraints}
\begin{align}
\sum_{j=1}^N m_{ij}=1 \qquad &i=1,\dots,N,\\
m_{ij}=m_{ji},\quad \forall i,j,\quad &m_{ii}=0\quad \forall i.
\end{align}
\end{subequations}
We denote by $\fm^*$ the optimal matching, and by $\sE_w^*\coloneqq\sE_w(\fm^*)$ the optimal cost.

The study of the matching problem has a very long tradition in the literature. It is well known that, from the algorithmic point of view, the problem belongs to the $\mathrm P$ computational complexity class, as \textcite{Kuhn} proved in 1955. \textcite{Edmonds1965,Edmonds1972} later extended and improved the original result of Kuhn, showing that, for a matching problem on a generic graph $\fG$ with $V$ vertices and $E$ edges, the optimal matching can be found in $O(VE\ln E)$ iterations. Matching problems have an important theoretical relevance, but they also appear in many practical applications, such as computer vision \cite{Trucco1998}, control theory \cite{Liu2011,Menichetti2014} and pattern matching \cite{Gusfield1997} among many other fields.

Aside with the purely algorithmic aspects of the problem, however, the study of the \textit{typical} properties of the solution of a given optimization problem, respect to an ensemble of realizations, is of a certain interest. For this reason, in a set of seminal contributions, \textcite{Orland1985} and \textcite{Mezard1985,Mezard1986a,Mezard1986,Mezard1988} analyzed \textit{random} matching problems. Their works paved the way to the application of analytical tools from the theory of disordered systems to many other combinatorial optimization problems \cite{mezard2009information,Bapst2013}. In most statistical physics literature, the cost coefficients $\{w_{ij}\}_{ij}$ of a random matching problem are taken to be i.i.d.~random variables (random link approximation). The average optimal cost $\overline{\sE_w^*}$ for the random matching problem has been derived, under this assumption, in Ref.~\cite{Mezard1985}, in the large $N$ limit. The results in Ref.~\cite{Mezard1985} were later rigorously proved by \textcite{Aldous2001}. A similar analysis has been performed for the random \textit{bipartite} matching problem (or assignment problem), i.e., the random matching problem defined on a bipartite graph, in which two types of vertexes to be matched appear. A conjecture on the finite size corrections to the average optimal cost for the assignment problem was proposed in Ref.~\cite{Parisi1998} and generalized by \textcite{Coppersmith1999}. This conjecture was later proved independently by \textcite{Linusson2004} and \textcite{Nair2005}. In Ref.~\cite{Parisi2002} finite size corrections to the average optimal cost both in the random matching problem and in the random assignment problem were analyzed, using replica techniques. Many other results about random matching problems have been obtained in recent years. In particular, the theory of cavity method \cite{Mezard2003} has been successfully applied to the study of matching problems in general \cite{Altarelli2011}; an example is the evaluation of the number of solutions of the problem on sparse random graphs \cite{Zdeborova2006}, or to the study of the multi-index matching problem \cite{Martin2005}. The application of the cavity method (called belief propagation in its algorithmic version) to the assignment problem has been rigorously justified by \textcite{Bayati2008}.

In the present work, we are interested in a variation of the random matching problem in which the cost coefficients $\{w_{ij}\}_{ij}$ appearing in Eq.~\eqref{costo} are correlated random variables, due to an underlying Euclidean structure. In particular, we associate to each vertex $i$ of the complete graph a point $\Bx_i$ in the $d$-dimensional unit cube $[0,1]^d$. Then we will consider the cost coefficients to be given by
\begin{equation}
\label{w}
w_{ij}\coloneqq\norm{\Bx_i-\Bx_j}^p,\quad p>0.
\end{equation}
In the expression above, $\norm{\bullet}$ is the Euclidean norm. The points $\Bx_i$ are assumed to be independently and uniformly distributed in $[0,1]^d$, and, as usual, we are interested in the asymptotic  limit of the average (over the points' distribution) of the optimal cost. The formulated problem is therefore called (random) \textit{Euclidean} matching problem (EMP). In Fig.~\ref{fig:match2d} we present a pictorial representation of an instance of the EMP on the unit square. In the bipartite version of the EMP, or Euclidean assignment problem (EAP), two sets of points with the same cardinality are randomly generated in a certain domain, and we ask for the average optimal cost of the matching among them, requiring that points of different type only are matched. 

\begin{figure}
\includegraphics[width=0.9\columnwidth]{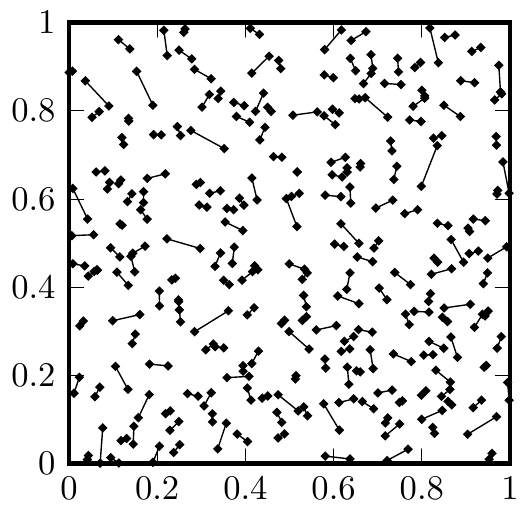}
\caption{An example of optimal matching among uniformly distributed points on the unit square, assuming open boundary conditions. The cost matrix is given by the Euclidean distances among the points, i.e. it has the expression given in Eq. \eqref{w} with $p=1$.\label{fig:match2d}}
\end{figure}

The EMP has been investigated by \textcite{Mezard1988} assuming the Euclidean correlation among the weights as a perturbation to the purely random case. The adopted strategy was to include, in a replica computation, only triangular correlation (i.e., the correlation among three weights), neglecting higher orders. This approach was proved successful, as numerically verified in Ref.~\cite{Houdayer1998}. Their work inspired the present contribution and will be therefore discussed more carefully below. Many results have been obtained for the EAP as well. In particular, apart from fundamental geometric properties of the solution \cite{Holroyd2010} and rigorous results on the scaling of the optimal cost \cite{Ajtai1984,Dobri1995,BoutetdeMonvel2002}, a successful ansatz for the $p=2$ case was recently proposed in Refs.~\cite{Caracciolo2014,Caracciolo2015s} for the evaluation of the average optimal cost and of the correlation functions of the solution. This ansatz was later justified through a functional approach \cite{Caracciolo2015,Sicuro2017}. In the one-dimensional case, in particular, a correspondence between the solution of the problem and a Gaussian stochastic process emerged \cite{Boniolo2012,Caracciolo2014c,Sicuro2017}.

In the present paper, we will consider the EMP on the unit hypercube in $d$ dimensions \footnote{In order to avoid trivial finite volume effects proportional to the number of points near to the surface, we will consider periodic boundary conditions, where the Euclidean distance ${\|\Bx_i-\Bx_j\|}$ is computed between the point $\Bx_i$ and the nearest of the all the images of $\Bx_j$ in the set $\{\Bx_j+\mathbf n\}_{\mathbf n}$, $\mathbf n\in\Z^d$ being a vector with integer components.}. We will improve the calculation of \textcite{Mezard1988}, going beyond the triangular approximation and including all one-loop, or polygonal, corrections to the pure mean field case. In particular, we will show that polygonal corrections can be written down, after some calculations, in a numerically manageable form. The paper is organized as follows. In Section \ref{sec:cluster} we set up a replicated formalism for the EMP, dealing with Euclidean correlations through Mayer's cluster expansion. We provide also a set of diagrammatic rules emerging from the theory, that allow us to evaluate the contribution of each diagram in the expansion. In Section \ref{sec:saddle}, we proceed performing a saddle point action approximation and, moreover, imposing a replica symmetric assumption. In Section \ref{sec:polyg} we focus our attention on a specific class of diagrams appearing in the expansion, e.g., the class of one-loop diagrams, which we call polygons: we treat the polygonal contribution using a (replicated) transfer matrix formalism. In Section \ref{sec:numeric} the asymptotic cost in the polygonal approximation is computed for different dimensions $d$ and for $p=1$. We also give some details on the spectral properties of the transfer matrix operators and highlight the presence of a null mass at zero momentum. Finally in Section \ref{sec:subsub} we will show how, from the formal structure of the polygonal series, some non-trivial finite size correction exponents in the random link case can be derived. 

\section{Cluster Expansion}\label{sec:cluster}
As usual in statistical physics' analysis of optimization problems, and following \textcite{Mezard1985,Mezard1988}, we shall associate a partition function to a given instance of the EMP. Let us assume that a set of $N$ points $\{\Bx_i\}_{i=1,\dots,N}$ is given on the $d$-dimensional unit cube $[0,1]^d$, with associated cost matrix elements $w_{ij}\coloneqq\|\Bx_i-\Bx_j\|^p$, with $p>0$. The points $\{\Bx_i\}_i$ are supposed uniformly and independently distributed in $[0,1]^d$. We define
\begin{multline}
Z_w(\beta)\coloneqq\sum_{\text{matchings }\fm}e^{-\beta N^{\frac{p}{d}} \sE_w(\fm)}
\\=\left(\prod_{i=1}^{N}\int_0^{2\pi}\frac{e^{i\lambda_i}\,d\lambda_i}{2\pi}\right)\prod_{i<j}\left[1+e^{-\beta N^{p\over d} w_{ij}-i\lambda_i-i\lambda_j}\right],
\end{multline}
where the Lagrange multipliers $\{\lambda_i\}_{i}$ enforce the constraints in Eq.~\eqref{constraints}. As discussed in Ref.~\cite{Mezard1988}, the factor $N^{\frac{p}{d}}$ is necessary in order to have an appropriate large $N$ limit for thermodynamic functions, when the average over points' positions is considered. Denoting by $\overline{\bullet}$ the expectation over the points' positions, the average free energy density of the system is given by
\begin{equation}
f(\beta)\coloneqq \lim_{N\to\infty}-\frac{1}{\beta N}\,\overline{\ln Z_w(\beta)}.
\label{def:free}
\end{equation}
It is convenient to define a rescaled average optimal cost $\hat \sE$, so that the following relations hold:
\begin{equation}
\hat \sE =
\lim_{N\to\infty} \overline{N^{\frac{p}{d}-1}\sE^*_w}
=\lim_{\beta\to\infty} f(\beta).
\end{equation}
We deal with the average over the disorder using the replica trick
\begin{equation}
f(\beta)=\lim_{N\to \infty}\lim_{n\to 0}\frac{1-\overline{Z^n_w(\beta)}}{nN\beta}.
\end{equation}
As usual we will consider an integer number $n$ of replicas during the computation, and then we will perform analytic continuation for $n\downarrow 0$.
The average replicated partition function reads
\begin{equation}
\overline{Z^n}=
\left(\prod_{a=1}^n\prod_{i=1}^{N}\int_0^{2\pi}\frac{e^{i\lambda^a_i}\,d\lambda^a_i}{2\pi}\right)\overline{\prod_{i<j}\left(1+u_{ij}\right)},
\label{z-repl}
\end{equation}
where we have introduced the quantity
\begin{equation}
u_{ij}\coloneqq\sum_{r=1}^n e^{-r\beta N^{p\over d} w_{ij}}\sum_{1\leq a_1<\dots<a_r\leq n}e^{-i\sum_{m=1}^r\left(\lambda_i^{a_m}+\lambda_j^{a_m}\right)}.
\end{equation}
In the random link matching problem the average
\begin{equation}\label{product}
\overline{\prod_{i<j}\left(1+u_{ij}\right)}
\end{equation}
is easily performed, using the fact that the joint probability distribution of the weights $\{w_{ij}\}_{ij}$ factorizes \cite{Mezard1985}. In our case, however, this is not true anymore, due to the underlying Euclidean structure. In particular, the function $u_{ij}$ depends on the vertexes $i$ and $j$ because of both the Euclidean distance $\|\Bx_i-\Bx_j\|$, and the two sets of Lagrange multipliers $\{\lambda_i^a\}_a$ and $\{\lambda_j^a\}_a$. The quantity in Eq.~\eqref{product} can be therefore represented through a diagrammatic expansion, in complete analogy with the classical cluster expansion \cite{Salpeter1958,Hansen1990,Itzykson1991} introduced by \textcite{Mayer1940}, with $u_{ij}$ playing the role of the Mayer function. In particular, applying the results of \textcite{Pulvirenti2012}, we can write
\begin{equation}
\overline{\prod_{i<j}\left(1+u_{ij}\right)}\sim \exp\left(\sum_{\substack{\fg\subseteq\fK_N\\\text{biconn.}}}\ \overline{\prod_{e\in \fg} u_e}\right).
\label{mediaemmp1}
\end{equation}
The sum on the r.h.s.~runs over all biconnected sugraphs $\fg$ of the complete graph $\fK_{N}$. A biconnected graph is a graph that remains connected after the removal of any vertex with all adjacent edges (see Fig.~\ref{fig:biconn}). Here and in the following we will denote by $(i,j)\in \fg$, or equivalently $e\in \fg$, an edge of the graph $\fg$. Moreover we will denote by $E_\fg$ and $V_\fg$, or simply $E$ and $V$, the number of edges and the number of vertexes in $\fg$ respectively.
\begin{figure}[t]
\includegraphics{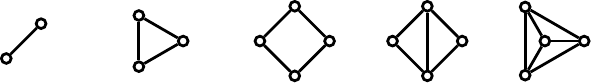}
\caption{\label{fig:biconn} Biconnected graphs up to four vertexes. The corresponding symmetry factors
$\sigma_\fg$ are $\frac{1}{2}$, $\frac{1}{6}$, $\frac{1}{8}$, $\frac{1}{4}$ and
$\frac{1}{24}$ from left to right.}
\end{figure}
In the mean field approximation only subgraphs with $E=1$ are considered. The average appearing in the arguments of the sum in Eq.~\eqref{mediaemmp1} removes the dependencies on the point positions. However, each contribution still depends on the indexes of the vertexes of the specific subgraph through the Lagrange multipliers $\{\lambda_i^a\}_{i,\,a}$. We introduce therefore a set of order parameters, symmetric under permutations of replica indexes, defined by
\begin{equation}
Q_{a_1\dots a_k}\coloneqq\frac{1}{N}\sum_{i=1}^{N}\exp\left(-i\sum_{j=1}^k\lambda_i^{a_j}\right),\quad 1\leq k\leq n,
\end{equation}
and the associated Lagrange multipliers $\hat Q_{a_1,\dots,a_k}$. In the large $N$ limit, for a given subgraph $\fg$, there are approximately $N^V\sigma_\fg$ subgraphs in $\fK_N$ isomorphic to $\fg$, $\sigma^{-1}_\fg$ being the number of automorphisms of $\fg$. With these considerations in mind, and after some simple manipulations, we can write
\begin{equation}
\sum_{\substack{\fg\subseteq\fK_N\\\text{biconn.}}}\overline{\prod_{e\in \fg} u_{e}}=
-n \beta N \sideset{}{'}\sum_{\fg} S_{\fg}[\beta,Q].
\label{psi1}
\end{equation}
The primed sum runs over all biconnected graphs with vertices labelled in $[V]$, for $2\leq V\leq N$, considered up to an automorphism. In Eq.~\eqref{psi1} we have collected a factor $-n \beta N$ for later convenience. The contribution of each graph  is given by
\begin{multline}
\label{psi2}
-n \beta S_{\fg}[\beta,Q]\coloneqq \sigma_{\fg} \sum_{\substack{\{\Ba^e\}_e}}\ \overline{\prod_{e\in \fg} e^{-\beta |\Ba^e| w_e}}^{\,\fg}\prod_{\substack{v=1}}^V Q_{\Ba(v)}\, \delta_{\Ba(v)}.
\end{multline}
Here, for each edge $e=(u,v)$, we have a sum over all non-empty subsets $\Ba^e\coloneqq\{a^e_k\}_k\subseteq[n]$, whose cardinalities are denoted by $|\Ba^e|$. We have also defined
\begin{equation}
\Ba(v)\coloneqq\bigcup_{u\in \partial v} \Ba^{(u,v)},
\end{equation}
union over the set $\partial v$ of the vertexes adjacent to $v$ in $\fg$ (see Fig.~\ref{fig:replica}). The indicator function $\delta_{\Ba(v)}$ takes value one if the incident edges of $v$ have distinct replica indexes, zero otherwise. This implies
\begin{equation}
r_e\coloneqq |\Ba(v)|=\sum_{u\in \partial v}\abs{\Ba^{(u,v)}}.
\end{equation}
Finally, the average in Eq.~\eqref{mediaemmp1} must be performed using the joint costs' distribution for a graph $\fg$
\begin{multline}
\label{rho}
\rho_\fg(\{w_e\}_{e})=\\=
\left(\prod_{u=1}^V \int_{\mathbb{R}^d} \,d^dx_u\right) \prod_{(u,v)\in \fg}\delta \left(w_{uv}-\norm{\mathbf x_u - \mathbf x_v}^p\right) \delta(\mathbf{x}_1).
\end{multline}

Let us make, now, a final remark. The strategy of Ref.~\cite{Mezard1988} was to perform the explicit computation of $\rho_{\fK_3}(w_{12}, w_{23},w_{31})$ for the triangular graph $\fg\equiv\fK_3$. Since this procedure is not easily generalizable, we can adopt a different approach. We can assign a momentum to each edge in the graph $\fg$, writing Eq.~\eqref{psi2} in the Fourier space as
\begin{multline}
\label{psi3}
-n \beta S_{\fg}[\beta,Q]=\\
=\frac{\sigma_{\fg}}{(2\pi)^{d(E-V+1)}}\prod_{e\in \fg} \left[\sum_{\substack{\Ba^e}} \int\,d^d k_e\  g_{|\Ba^e|}(k_e)\right]\\
\times \prod_{\substack{v=1}}^V \left[Q_{\Ba(v)}\,\delta_{\Ba(v)} \,\delta\left(\sum_{\substack{u\in\partial v\\\text{ingoing}}}\mathbf k_{uv}-\sum_{\substack{u\in\partial v\\\text{outgoing}}}\mathbf k_{uv}\right)\right].
\end{multline}
\begin{figure*}[t]
\includegraphics[width=1.6\columnwidth]{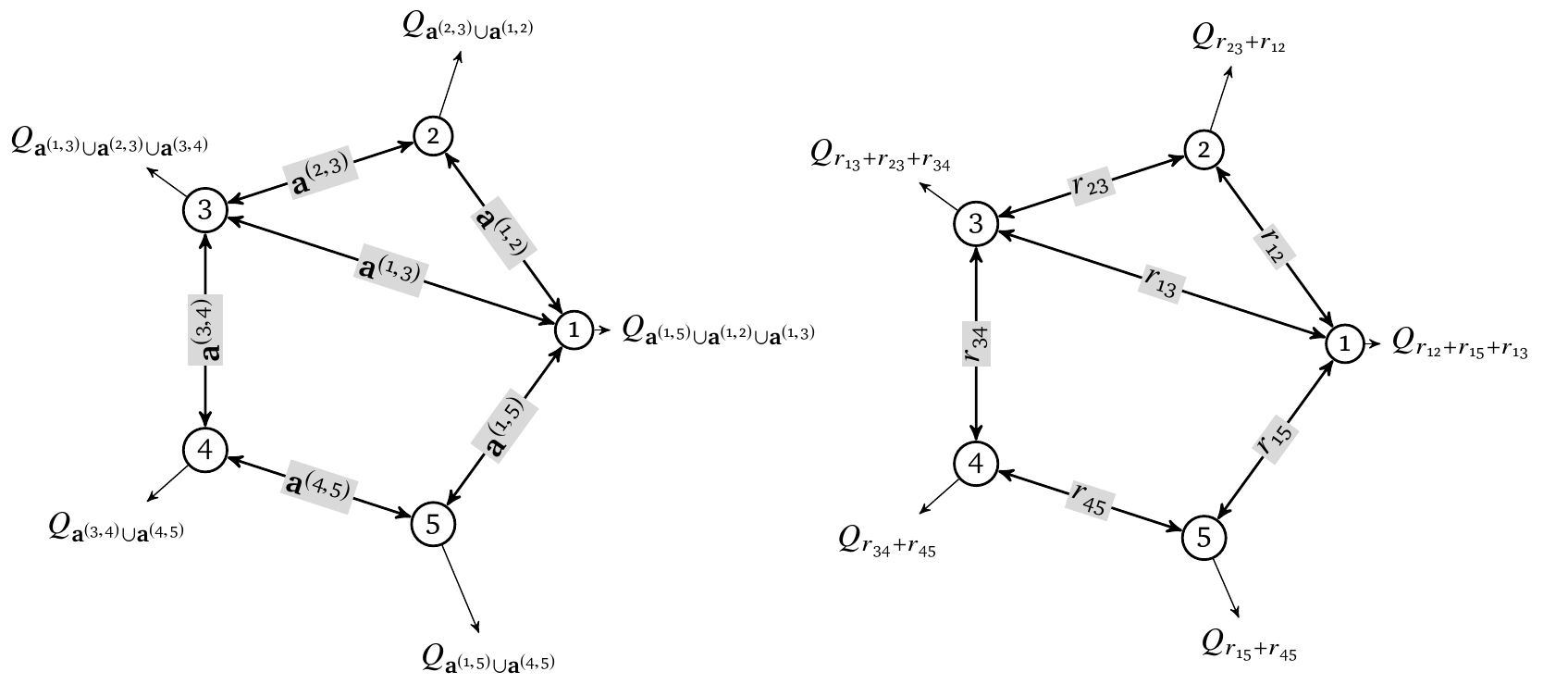}
\caption{\label{fig:replica}Pictorial representation of the construction of the replica indexes before and after the replica symmetric assumption for a biconnected graph.}
\end{figure*}
In the equation above $k_e\coloneqq\norm{\mathbf k_e}$ and
\begin{equation}
g_{r}(k)\coloneqq \Omega_{d}\int_0^\infty\, z^{d-1}e^{-r\beta  z^p}\pFq{0}{1}{-}{\frac{d}{2}}{-\frac{k^2z^2}{4}}\,d z,
\end{equation}
where ${}_0 F_1$ is a confluent hypergeometric function, defined as 
\begin{equation}
\pFq{0}{1}{-}{b}{z}\coloneqq \sum_{k=0}^\infty \frac{\Gamma(b)}{\Gamma(b+k)}\frac{z^k}{k!},
\end{equation}
and we have introduced the unit sphere's surface in $d-1$ dimensions
\begin{equation}
\Omega_d\coloneqq\frac{2\pi^\frac{d}{2}}{\Gamma\left(\frac{d}{2}\right)}.
\end{equation}
Note that a Dirac's delta function enforces the conservation of momentum on each vertex. As an additional prescription, one of the Dirac's delta has to be considered a Kronecker's delta in order to avoid an extra volume contribution. Feynman rules characterizing a generic diagram $\fg$ for the construction of Eq.~\eqref{psi3} are, at this point, given. An alternative and equivalent formulation for the momentum integration is given in Appendix \ref{app:genSg}.

\section{Replica symmetric assumption and saddle point approximation}\label{sec:saddle}
The results in the previous Section allow us to express the replicated partition function in a form that is suitable for a saddle point computation, i.e.,
\begin{equation}
\label{Act}
\overline{Z^n}\sim\left[\prod_{\Ba}\int_{-\infty}^{+\infty}\,d Q_{\Ba}\int_{-i\infty}^{+i\infty}\,\frac{d \hat Q_{\Ba}}{2\pi}\right]e^{-n \beta N S[\beta,Q,\hat Q]},
\end{equation}
where the product runs over the non-empty subsets of replica indexes $\Ba\subseteq [n]$. The action in the exponent in Eq.~\eqref{Act} has the structure
\begin{equation}\label{azionecompleta}
S[\beta,Q,\hat Q]\coloneqq S_{\text{mf}}[\beta,Q,\hat Q]+\sideset{}{'}\sum_{\fg\colon E_\fg\geq 3} S_\fg[\beta,Q].
\end{equation}
The first contribution is the mean field term, corresponding to the biconnected graph with one edge only, plus other terms deriving from the constraints imposed for the introduction of the order parameters $Q_{\Ba}$ and $\hat Q_{\Ba}$. It is given by
\begin{multline}\label{Smf}
-n\beta S_{\text{mf}}=-\sum_{\Ba} Q_{\Ba}\, \hat{Q}_{\Ba}
+\frac{1}{2} \sum_{\Ba} g_{|\Ba|}(0)\, Q_{\Ba}^2\\
+\ln\left[\prod_{a=1}^n\int_0^{2\pi}\frac{e^{i\lambda^a}\,d\lambda^a}{2\pi} \exp\left(\sum_{{\Ba}} \hat{Q}_{\Ba}\, e^{-i \sum_{l=1}^{|{\Ba}|} \lambda^{a_l}}\right)\right].
\end{multline} 

Before performing the analytic continuation for small $n$, we assume a \textit{replica symmetric ansatz}, i.e.,
\begin{equation}
Q_{\Ba}\equiv Q_{|\Ba|}\quad\text{and}\quad\hat Q_{\Ba}\equiv \hat Q_{|\Ba|}.
\end{equation}
It is convenient, in order to take the $n\downarrow 0$ and $\beta\uparrow\infty$ limits, to introduce a functional representation for the order parameters, namely
\begin{equation}\label{G}
G(x)\coloneqq \sum_{r=1}^\infty(-1)^{r-1}\frac{\hat Q_re^{\beta r x}}{r!}.
\end{equation}
The saddle point condition with respect to $\{\hat Q_r\}_r$ in the $n\to 0$ limit immediately yields
\begin{equation}
\label{saddle1}
\frac{\delta S}{\delta\hat Q_r}=0\Rightarrow Q_r=\beta\int\frac{e^{\beta rx-G(x)}}{(r-1)!}\,d x.
\end{equation}

If we restrict ourselves to the mean field approximation, in the limit $n\downarrow 0$ we can express the saddle-point mean field action as function of $G$ only (see Appendix \ref{app:meanfield} for a detailed computation)
\begin{multline}\label{smf}
S_{\text{mf}}=-\int\left(e^{-e^{\beta x}}-e^{-G(x)}\right)\,d x+\int G(x)e^{-G(x)}\,d x\\
-\frac{1}{2}\int\rho(w)e^{-G(x)-G(y)}\frac{\partial J_0\left(2e^{\beta\frac{x+y-w}{2}}\right)}{\partial x}\,dx\,dy\,dw,
\end{multline}
where $J_0(x)$ is a Bessel function of the first kind and
\begin{equation}
\rho(w) = \frac{\Omega_d}{p}\ w^{\frac{d}{p}-1} \theta(w)
\end{equation}
is the distribution of the weight appearing in the graph with $E=1$.
Taking the zero temperature limit of Eq.~\eqref{smf}, we obtain the mean field cost
\begin{multline}\label{emf}
\sE_{\text{mf}}=-\int\left(\theta(-x)-e^{-G(x)}\right)\, d x+\int G(x)e^{-G(x)}\, d x\\
+\frac{1}{2}\int\rho(x+y)\ e^{-G(x)-G(y)}\, d x\, d y
\end{multline}
The saddle point condition for $G(u)$ in the mean field approximation is then
\begin{equation}\label{EqGmf}
\frac{\delta \sE_\text{mf}}{\delta G(u)}=0\ \Rightarrow\  G(u)=\int\rho(w)\, e^{-G(w-u)}\,dw,
\end{equation}
to be used in Eq. \eqref{emf} to obtain the mean field approximation to the optimal cost. 
As anticipated, the mean field case was discussed in Ref.~\cite{Mezard1985} in the study of the random link matching problem.

If we consider, instead, the complete action, each term $S_{\fg}$ gives a correction to the mean field contribution that, in general, is of the same order of the mean field contribution itself, being the dependence of $S_\fg$ from $N$ already factorized out for large $N$, as in Eq.~\eqref{Act}. However, it has been observed by \textcite{Houdayer1998} that the contribution of the different graphs is exponentially small in the dimension $d$ of the Euclidean space.
Defining the zero-temperature limits $\sE\coloneqq\lim_{\beta\uparrow\infty}S$  and $\sE_\fg\coloneqq \lim_{\beta\uparrow\infty} S_\fg$, which can be conveniently considered as functionals of $G(u)$, saddle point extremization gives
\begin{multline}\label{saddleG}
\frac{\delta \sE}{\delta G(u)}=0\ \Rightarrow\\
G(u)=\int\rho(w)e^{-G(w-u)}\,dw-e^{G(u)}\sideset{}{'}\sum_{\fg\colon E_\fg\geq 3}\frac{\delta \sE_\fg}{\delta G(u)}.
\end{multline}
The resulting order parameter $G$ can then be used to evaluate the average optimal cost
\begin{equation}\label{exactE}
\hat \sE  = \sE_{\text{mf}} + \sideset{}{'}\sum_{\fg\colon E_\fg\geq 3} \sE_\fg.
\end{equation}
In the next Section we take a first step beyond the mean field approximation, considering the terms in the series corresponding to graphs having a single loop, and ignoring the others.

\section{One-loop contributions}\label{sec:polyg}
In this Section, we consider the one-loop terms appearing in the action in Eq.~\eqref{azionecompleta}. We denote by $\fp_E$ the one-loop graph having $E$ vertexes and $E$ edges and we will use the term \textit{polygon} for such graphs.
Polygons appear also as first finite size corrections in random link matching problem \cite{Parisi2002,ER13,RRG14} and as first corrections in certain perturbative expansions around the Bethe approximation \cite{Vontobel2011,Mori2012,Lucibello2015}. We shall denote by $S_E\coloneqq S_{\fp_E}$ the contribution of the polygon $\fp_E$ to the action in Eq.~\eqref{psi3}. The symmetry factor of a polygon $\fp_E$ is given by
\begin{equation}
\sigma_{\fp_E}=\frac{1}{2E}.
\end{equation}
Neglecting non-polygonal contributions, we thus approximate the full replicated action in Eq.~\eqref{azionecompleta} by
\begin{equation}
S_{\text{poly}}\coloneqq S_{\text{mf}}+\sum_{E=3}^\infty S_E.
\end{equation}
To explicitely compute the terms $S_E$, we can proceed in analogy with the computation performed in Ref.~\cite{Parisi2002} for the finite size corrections in the random link problem. We introduce the $(2^n-1)\times(2^n-1)$ matrix $\mathsf T(k)$, also called replicated transfer matrix, whose elements are given by
\begin{equation}
T_{\Ba\Bb}(k)\coloneqq \delta_{\Ba\cap\Bb=\emptyset}\ Q_{\abs{\Ba}+\abs{\Bb}}\sqrt{g_{\abs{\Ba}}(k)g_{\abs{\Bb}}(k)}.
\end{equation}
Here $\Ba$ and $\Bb$ are, as before, non-void elements of the power set of the replica indexes $[n]$, whose cardinality is expressed as $\abs{\Ba}$ and $\abs{\Bb}$ respectively, and $\delta_{\Ba\cap\Bb=\emptyset}$ is defined by
\begin{equation}
\delta_{\Ba\cap\Bb=\emptyset}=\begin{cases}1&\text{if $\Ba\cap\Bb=\emptyset$}\\ 0&\text{otherwise.}\end{cases}
\end{equation}
Therefore the contribution of the polygon $\fp_E$, according to Eq. \eqref{psi3} and under the replica symmetric assumption, can be written as
\begin{equation}
-n\beta S_E =\frac{1}{2 E}\frac{\Omega_d}{(2\pi)^d}\int_0^\infty k^{d-1}\tr{\mathsf T^E(k)}\,d k.
\end{equation}
To proceed further, we will diagonalize $\mathsf T(k)$ following the classical strategy of \textcite{DeAlmeida1978} and already adopted in Ref.~\cite{Parisi2002}. In fact, the next steps of our calculation, reported in Appendix \ref{app:polygons}, differ from the ones of  Ref.~\cite{Parisi2002} in the random link problem for the presence of the momentum variable $k$ only.

The matrix $\mathsf T(k)$ is invariant under permutations of the replica indexes, therefore we block diagonalize it according to the irreducible representations of the permutation group. The subspaces that are invariant under the action of the symmetry group are classified according to the number $q$ of distinguished replica indexes, in some appropriate basis spawning them (see Refs.~\cite{Monasson96,RTM14} for an application of the same procedure to disordered Ising models). Particular care has to be taken in the limits $n\downarrow 0$ followed by $\beta\uparrow \infty $. We give here only the final result, whereas the required computation is presented in the Appendix \ref{app:polygons}. 
The polygon cost functional $\sE_E$ is divided into two terms: the first one, $\sE^{(01)}$, accounting for the  contribution of the subspaces $q=0,1$, corresponds to the so-called longitudinal and anomalous sectors in spin glass literature; the second one, $\sE^{(2+)}$, accounts for all the other subspaces, $q\geq 2$, and it is non-zero for $E$ odd only. The average optimal cost functional is thus given by
\begin{subequations}
\label{eseriefull}
\begin{equation}
\label{eserie}
 \sE_\text{poly} = \sE_{\text{mf}}+\sum_{E=3}^\infty\left(\sE^{(01)}_E+\sE^{(2+)}_{E}\right).
\end{equation}
The term $\sE_{\text{mf}}$ here is given by Eq.~\eqref{emf}.
The contributions $\sE_E^{(01)}$ and $\sE_E^{(2+)}$, with $\sE_E=\sE_E^{(01)}+\sE_E^{(2+)}$ are given by
\begin{align}
\sE_E^{(01)}&\coloneqq\frac{(-1)^E\Omega_d}{2(2\pi)^d}\int_0^\infty k^{d-1}\tr{\mathsf H^{E-1}(0,k){\mathsf K}(k)}\,d k, \label{eserie01}\\
\sE_E^{(2+)}&\coloneqq\begin{cases}
\frac{\Omega_d}{E(2\pi)^d}\iint_0^\infty k^{d-1}\tr{\mathsf H^E(t,k)}\,d t\,d k&\text{$E$ odd},\\
0&\text{$E$ even.}
\end{cases}
\label{eserie2p}
\end{align}
\end{subequations}
In the equations above we have introduced the operator
\begin{subequations}
\begin{multline}
\label{operatorH}
\left[\mathsf H(t,k)\right]_{uv}\coloneqq \\
=\Omega_d\, e^{-\frac{G(u)+G(v)}{2}}\left.\frac{x^{\frac{d}{p}-1}\pFq{0}{1}{-}{\frac{d}{2}}{-\frac{k^2x^\frac{2}{p}}{4}}\theta\left(x\right)}{p}\right|_{x=u+v-2t},
\end{multline}
and the operator
\begin{multline}
\left[\mathsf K(k)\right]_{uv}\coloneqq\\
=\Omega_d\, e^{-\frac{G(u)+G(v)}{2}}\left.\frac{x^\frac{d}{p}}{d}\pFq{0}{1}{-}{\frac{d}{2}+1}{-\frac{k^2x^{\frac{2}{p}}}{4}}\theta(x)\right|_{x=u+v}.\end{multline}
\end{subequations}
As anticipated, the contribution $\sE_E^{(2+)}$  has an expression that is analogous to the finite size corrections computed in Ref.~\cite{Parisi2002} for the random link matching problem, whilst the sectors with $q=0,1$ produce a contribution $\sE_E^{(01)}$ that has no equivalent in that computation. 

The general saddle point equation for $G$ is given by Eq.~\eqref{saddleG}. However, keeping the polygonal contribution only, we can approximate Eq.~\eqref{saddleG} by
\begin{subequations}
\label{Gpolygfull}
\begin{multline}
\label{Gpolyg}
G(u)=\int\rho(w)e^{-G(w-u)}\,dw\\
-e^{G(u)}\sum_{E=3}^\infty\left[\frac{\delta\sE_E^{(01)}}{\delta G(u)}+\frac{\delta\sE_E^{(2+)}}{\delta G(u)}\right].
\end{multline}
The functional derivatives in Eq.~\eqref{Gpolyg} are given by

\begin{multline}
\frac{\delta\sE_E^{(01)}}{\delta G(u)}=\frac{(-1)^{E}\Omega_d}{2(2\pi)^d}\\\times\int_{0}^\infty k^{d-1}\sum_{m=0}^{E-1}\left[\mathsf H^{E-1-m}(0,k)\mathsf K(k)\mathsf H^m(0,k) \right ]_{uu}\,d k
\end{multline}
and similarly
\begin{equation}
\frac{\delta\sE_E^{(2+)}}{\delta G(u)}=-\frac{\Omega_d}{(2\pi)^d}\iint_{0}^\infty k^{d-1}\left[\mathsf H^{E}(t,k) \right ]_{uu}\,dk\,dt.
\end{equation}\end{subequations}
The computation of the spectra of $\mathsf H(t,k)$ and $\mathsf K(k)$ allows us to evaluate the polygonal correction both to the average optimal cost and to the saddle point solution for $G(u)$. The results of this computation will be presented in the next Section.

\section{Numerical results}\label{sec:numeric}
\begin{figure*}[ht]
\begin{subfigure}{0.45\textwidth}
\centering
eigenvalues of $\mathsf{H}(0,k)$
\includegraphics[width=\textwidth]{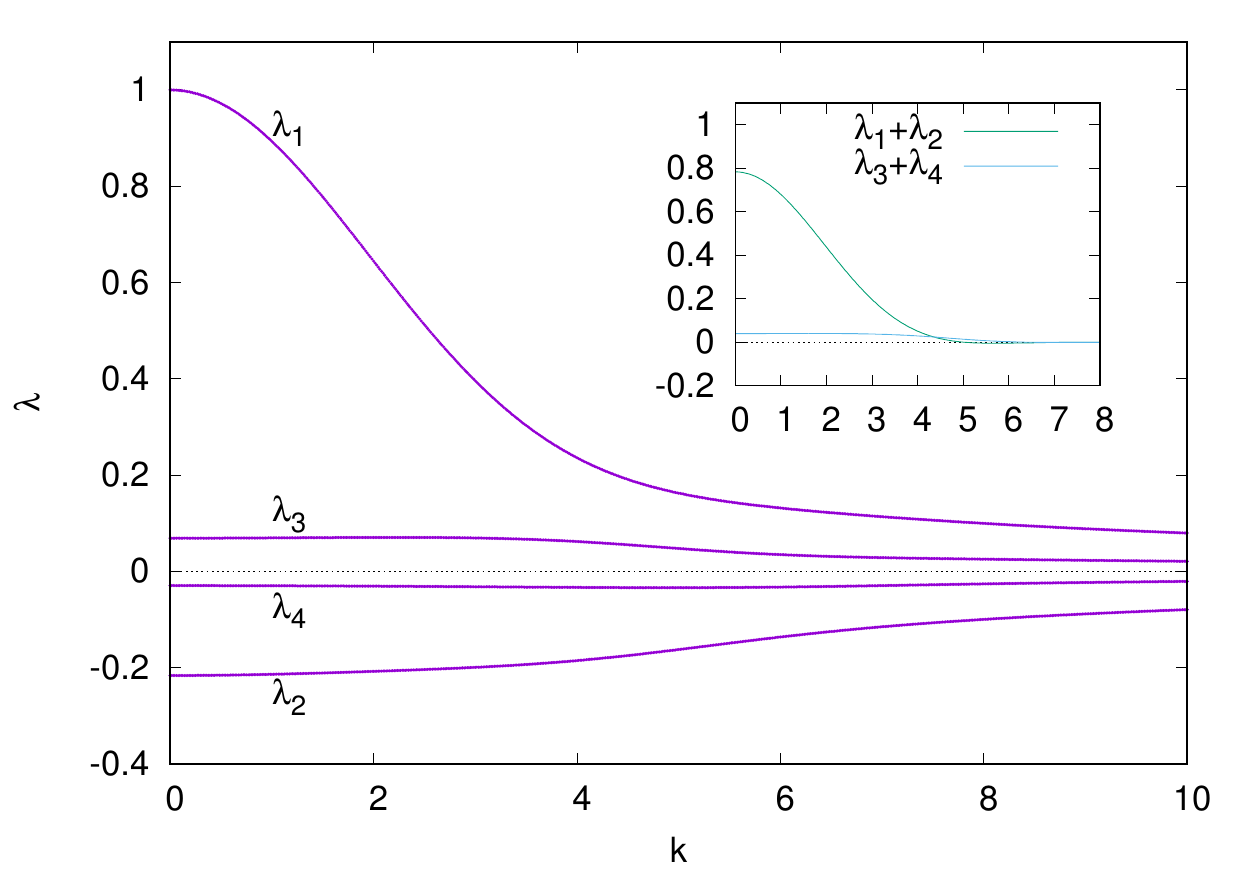}
\end{subfigure}
\begin{subfigure}{0.45\textwidth}
\centering
eigenvalues of  $\mathsf{H}(t,0)$
\includegraphics[width=\textwidth]{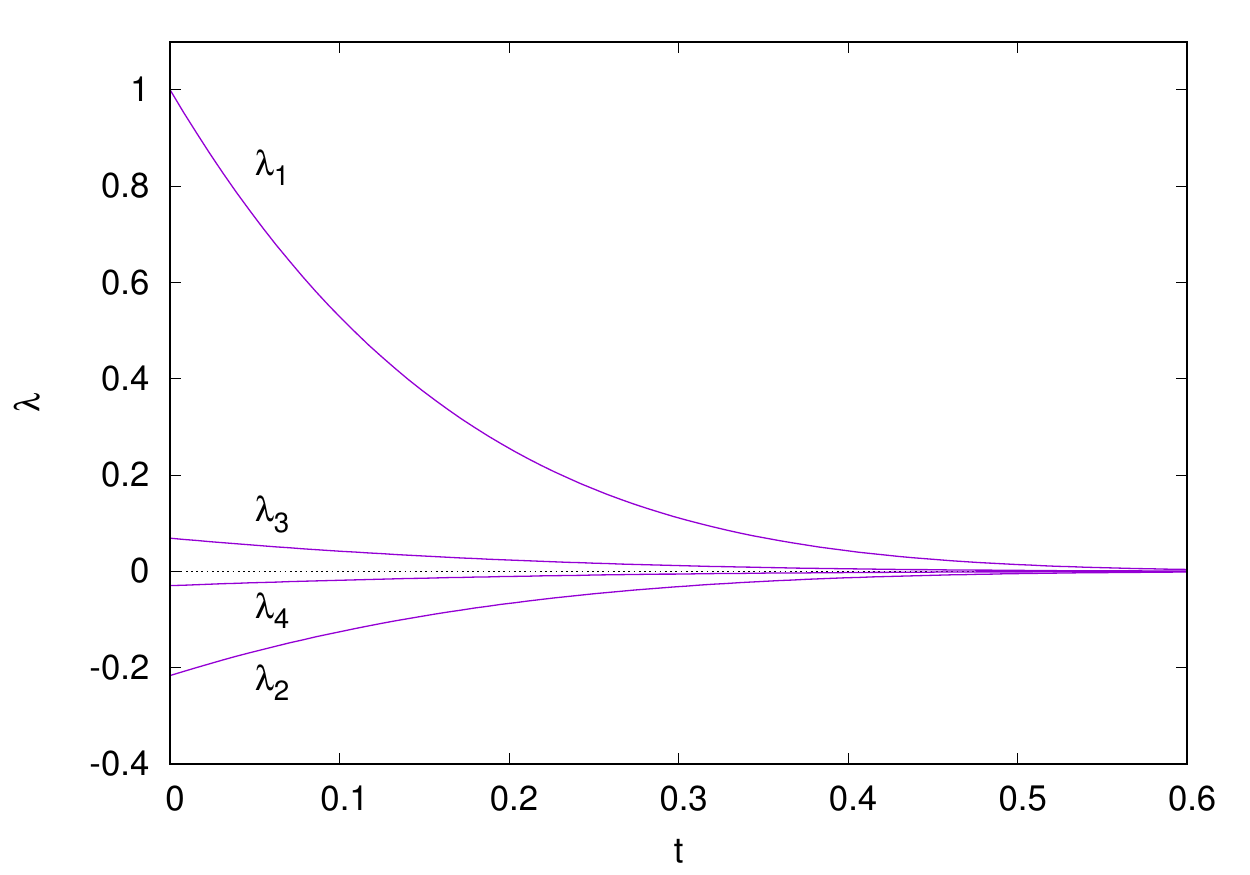}
\end{subfigure}
\caption{The four largest magnitude eigenvalues of the operator $\mathsf{H}(t,k)$ as a function of the momentum $k$ for $t=0$ (\textit{left}) and as function of $t$ for $k=0$  (\textit{right}).}   
\label{fig:eigs}
\end{figure*}
In order to compute, for a given dimension $d$ and cost exponent $p$, the polygonal approximation to the average optimal cost, we have to evaluate the cost functional $\sE_\text{poly}$, given in Eq.~\eqref{eserie}, on the solution $G_\text{poly}$ of the saddle point equation \eqref{Gpolyg}. However, a na\"ive numerical computation of the terms of the series becomes rapidly infeasible as the number of edges $E$ increases (e.g., the term  $\sE_E^{(2+)}$ in Eq.~\eqref{eserie2p} involves $E+2$ integrations). 

We adopted therefore a different strategy. We evaluated the spectrum of a discretized representation of the operator $\mathsf H(t,k)$. Typically, few of the largest eigenvalues, and the corresponding eigenvectors, are sufficient to approximate the operator within the required precision. The infinite sum over $E$ in Eq.~\eqref{eserie} and Eq.~\eqref{Gpolyg} could also eventually be taken before the integrations in $k$ and $t$. Proceeding in this way, we managed to compute efficiently $\sE_\text{poly}$ from  Eq.~\eqref{eserie} for any given $G(u)$, using the expressions
\begin{align}
\sum_{E\geq 3} \sE^{(01)}_E &= \frac{\Omega_d}{2(2\pi)^d}\int_0^\infty k^{d-1} \sum_\lambda 
\frac{-\lambda^2}{1+\lambda}\braket{\lambda|\mathsf{K}(k)|
\lambda}\,d k, \\
\sum_{E\geq 3} \sE^{(2+)}_E &= \frac{\Omega_d}{(2\pi)^d}\iint_0^\infty k^{d-1} \sum_\lambda 
(\atanh(\lambda)-\lambda)\,d t\,d k,
\label{sumE2p}
\end{align}
where the sum runs over the eigenvalues of $\mathsf{H}(0,k)$ and $\mathsf{H}(t,k)$   and we omitted the dependence of $\lambda$ from $t$ and $k$. 

On the other hand, the computation of $G_\text{poly}$ through iterations of Eq.~\eqref{Gpolyg} proved to be much harder than in the mean field case, due to some numerical instabilities that prevented the iterative procedure to reach a fixed point, even truncating the expression to the $E=3$ term. We took the alternative approach of dealing with $\sum_E \sE_E$ as a perturbation to $\sE_\text{mf}$, evaluating $\sE_\text{poly}$ in Eq.~\eqref{eserie} on $G_{\text{mf}}$, solution of Eq.~\eqref{EqGmf}. Observe that $G_{\text{mf}}(u)\, e^{-\sqrt{G_{\text{mf}}(u)}}$
is the leading eigenfunction of the operator $\mathsf H(0,0)$ with eigenvalue $\lambda_1(0,0)=1$, that is, we have a theory with a zero mass when $t=k=0$. The commutation of sum and integral leading to Eq. \eqref{sumE2p} is justified in spite of the singularity in the integrand for $t=k=0$, once one takes into account the behavior of the largest eigenvalue $\lambda_1\sim e^{-a t-b k^2}$, with $a,b>0$ and for small $t$ and $k$, as we checked numerically (see also Fig. \ref{fig:eigs}) and analytically (using perturbation theory).

We report the results of our  estimates for the (rescaled) average optimal cost in Table \ref{tab:energy}, in the case $p=1$ and using the first six eigenvalues of $\mathsf H(t,k)$. Our analytical predictions for $\hat \sE$  are compared to the numerical values $\sE_\text{num}$ obtained in Ref.~\cite{Houdayer1998}, where the authors applied an exact algorithm to random instances of the EMP and averaged over many samples. Also, for comparison, we report the values obtained for 
 \begin{equation}
 \sE_\triangle \coloneqq  \sE_\text{mf} + \sE_3,
 \end{equation}
with $\sE_\text{mf}$ and $\sE_3=\sE_3^{(01)}+\sE_3^{(2+)}$ given by Eq.~\eqref{emf} and Eqs.~\eqref{eseriefull} respectively. $\sE_\triangle$ is therefore the cost comprehensive of the triangular correlations only, as considered in Refs.~\cite{Mezard1988,Houdayer1998}. In Appendix \ref{app:triangular} we show how our expression for $\sE_3$ obtained through diagonalization in the invariant subspaces of the replica permutations group can be mapped into the expression given in Ref.~\cite{Mezard1988}.
 
We also defined $G_{\triangle}$ to be the saddle point solution for $G$ corresponding to the triangular approximation $\sE_\triangle$, and we computed it according to Eq.~(34) of Ref.~\cite{Mezard1988}. Note that a small mistake appears there in the final formulas \footnote{\label{errore} The formula for $G_\triangle$ appearing in Ref.~\cite{Mezard1988} is slightly incorrect. Indeed, the factor $2$ preceding the first integral should be changed to $2 \alpha^{\frac{d}{\nu}}$. Also, to bridge their notation with ours, one has to set $\nu=p$ and $\alpha=2^{ -\frac{p}{d}}$.}. Our results for $\sE_\triangle$ computed on $G_\triangle$ are slightly different from the ones reported in Ref.~\cite{Houdayer1998} ($\beta^{EC}$ in Table 5 of that paper). Since their numerical results were based on the analytical results in Ref.~\cite{Mezard1988}, we suspect that the discrepancy is due to the aforementioned error, that went unnoticed.

\begin{table}
\begin{tabular}{|c|c|c|c|c|c|}
\hline
 $d$ & $\sE_{\text{num}}$ & $\sE_{\text{mf}}$ in $G_{\text{mf}}$& $\sE_{\triangle}$ in $G_{\text{mf}}$   & $\sE_{\text{poly}}$ in $G_{\text{mf}}$  & $\sE_{\triangle}$ in $G_{\triangle}$ \\
\hline
1 & 0.5       & 0.4112335 &  0.33624   &   -      & 0.33623  \\
2 & 0.3104(2) & 0.3225805 &  0.29699   & 0.31376  & 0.30291  \\
3 & 0.3172(2) & 0.3268392 &  0.31255   & 0.31998  & 0.31536  \\
4 & 0.3365(3) & 0.3432274 &  0.33399   & 0.33809  & 0.33554  \\
5 & 0.3572(2) & 0.3621749 &  0.35577   & 0.35825  & 0.35669  \\
6 & 0.3777(1) & 0.3814168 &  0.37678   & 0.37838  & 0.37735  \\
\hline
\end{tabular}
\caption{Comparison of the analytical predictions for the average optimal cost for many dimension and for $p=1$. The values for $\sE_{\text{num}}$, corresponding to the average matching cost obtained by an actual matching procedure, are taken from Ref.~\cite{Houdayer1998}}
\label{tab:energy}
\end{table}

The results in Table \ref{tab:energy} show that $\sE_\text{poly}$, computed as a perturbation to $\sE_\text{mf}$, is a consistent improvement over the mean-field result in any dimension. Comparison with $\sE_{\triangle}$ in $G_{\triangle}$ is unfavourable in high dimension, though. Further investigation using the appropriate saddle point $G_\text{poly}$ are due to asses the relevance of this particular diagrammatic class in the cluster expansion.

\section{Sub-sub-leading correction in the random link problem}\label{sec:subsub}

We reconsider now the random link matching problem, that is the matching problem on $\fK_N$ with costs independently and uniformly distributed in the interval $[0,1]$. The average optimal cost $\overline{\sE_\text{RL}^*}$ has a finite asymptotic limit, computed for the first time in Ref.~\cite{Mezard1985} through the replica method, as
\begin{equation}
\lim_{N\to\infty} \overline{\sE_\text{RL}^*}\equiv\sE_\text{mf}=\frac{\pi^2}{12}.
\end{equation}
The $O(1/N)$ correction to the asymptotic cost has been obtained in Refs.~\cite{Mezard1987,Parisi2001, Ratieville2003}. In particular in Ref.~\cite{Parisi2001} it is shown that for large $N$
\begin{subequations}\label{rl-eq}
\begin{equation}
\overline{\sE_\text{RL}^*} = \sE_{\text{mf}} +\frac{\Delta{\sE}}{N} +o\bigg(\frac{1}{N}\bigg),
\label{rl-erl}
\end{equation}
with
\begin{equation}
\Delta{\sE} = \frac{\zeta(2)}{4}-\frac{\zeta(3)}{2}+\sum_{\substack{E\geq 3\\E\text{ odd}}} \frac{1}{2 E}\int_0^\infty\tr{\mathsf{H}^E(t,0)} \,d t
\label{rl-series}
\end{equation}\end{subequations}
Here we recognize the same structure of the polygonal expansion in the Euclidean case. The main differences are the absence of the momentum integration and of the $\sE^{(01)}$ term, which is equal to zero in this case \cite{Parisi2001}. The operator $\mathsf{H}$ is in fact the same we have defined in Eq.~\eqref{operatorH} for our one-loop computation in the EMP, assuming $d=p=1$. 

We will show now how the particular form of Eqs.~\eqref{rl-eq} allows us to predict the scaling with $N$ of the next order finite size correction in the random link matching problem.  
Let us start observing that, for large $E$, the integral is dominated by the region around $t=0$. 
We assume the behavior 
\begin{equation}
\tr{\mathsf{H}^E(t,0)}\sim \lambda^E(t)\sim\lambda^{-c t E},
\end{equation}
for small $t$ and large $E$, where the coefficient $c>0$ can be explicitly computed using perturbation theory. Performing the $t$ integration, we find that the coefficients of the series in Eq.~\eqref{rl-series} decay as $E^{-2}$. We can extract the sub-sub-leading scaling with $N$ of the optimal cost, which is due to counting correction in the number of loops at finite $N$, using a simple heuristic argument. A random path on the complete graph $\fK_N$ of length $\ell$ has a probability of intersecting itself in the next step of  order $\ell/N$. Therefore, for a random path of length $E$  the total probability of intersection is of order $E^2/N$ and a cross-over arises at the scale  $E\sim \sqrt{N}$.  As a consequence, at finite $N$, the sum in Eq.~\eqref{rl-series} should be opportunely regularized. Choosing an appropriate
regularizing function $f(x)$, with limits $1$ and $0$ for $x\downarrow 0$ and $x\uparrow \infty$ respectively, we have the relation
\begin{equation}
\label{rl-e32}
\sum_E \frac{1}{E^2}\ f\left(\frac{E}{\sqrt{N}}\right) \sim a + \frac{b}{\sqrt{N}},
\end{equation}
as it can be easily showed approximating the sum with an integral.
With these assumptions  the first two finite size corrections to the asymptotic cost take the form
\begin{equation}
\overline{\sE_\text{RL}^*} \sim \frac{\pi^2}{12} + \frac{e_1}{N} + \frac{e_{3/2}}{N^\frac{3}{2}}.
\end{equation}
The anomalous $\frac{3}{2}$ exponent obtained using this simple argument is indeed perfectly consistent with the numerical simulations we performed using an exact optimization algorithm \cite{Dezso2011}, see Fig.~\ref{fig:match-fullyconn}.

A refined computation of the terms appearing in the $O(1/N)$ corrections gives  $e_1=0.0674(1)$ \cite{Parisi2001,REMIDDI}, in agreement with our numerical data. From numerical fit  we then obtain the estimate $e_{3/2}=-1.24(4)$ for the coefficient of the $O(1/N^{\frac{3}{2}})$ correction.

The extension of these considerations to the polygonal contributions we computed in the Euclidean case to obtain a prediction for the exponent of the finite size correction remains an interesting open problem.

\begin{figure}
\includegraphics[width=\columnwidth]{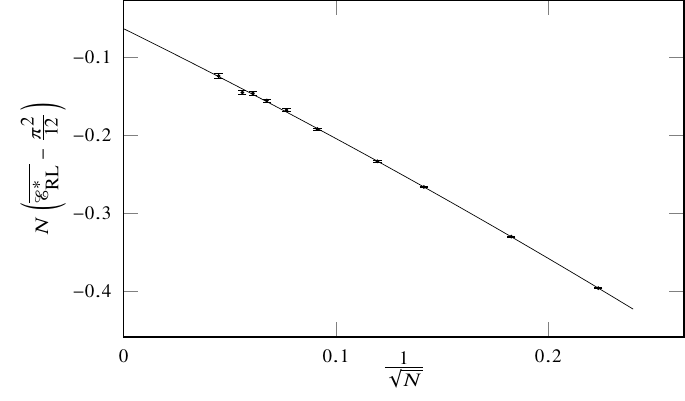}
\caption{Average optimal cost in the random link matching problem as a fuction of $N^{-1}$. Numerical data points are shown along a quadratic fit in $N^{-\frac{1}{2}}$. The fact that the data points are linearized by the chosen scaling of the axes implies a $O(N^{-\frac{3}{2}})$ finite size correction to the asymptotic average optimal cost.}
\label{fig:match-fullyconn}
\end{figure}

\section{Conclusions and perspectives}\label{sec:conclusions}
In the present work, we have discussed the random EMP on the unit hypercube in the thermodynamic limit. We have adopted the classical replica approach. It is well known \cite{Mezard1988} that Euclidean correlations among weights can be considered as corrections to a mean field contribution corresponding to the purely random case. We have shown that the Euclidean corrections can be treated in a Mayer-type expansion of biconnected diagrams, each one of them representing a different order of correlation among weights. Subsequently we restricted  our computation to the polygonal contribution in the replica symmetric hypothesis, showing that, in this case, the corrections can be properly evaluated using a transfer matrix approach. We have obtained an implicit expression for the average optimal cost in terms of the spectrum of two operators, $\mathsf K$ and $\mathsf H$. Finally, we have presented a numerical study of our results, comparing our predictions with the numerical simulations.

As specified above, in our calculation we did not evaluate non-polygonal diagrams that should be included to obtain the leading contribution to the average optimal cost. These contributions correspond to the existence of additional inner loops. An analytic treatment of these contributions would greatly improve the final theoretical predictions. Moreover, other quantities of interest related to the problem, like correlation functions, were not considered here. A restatement of the previous results in a cavity method formalism is another interesting open problem.

\section{Acknowledgments}
G.S.\@ is grateful to Sergio Caracciolo, from the University of Milan, for many, fruitful discussions. He also acknowledges the financial support of the John Templeton Foundation.
G.P. is grateful to Ettore Remiddi and Stefano Laporta for discussions on the precise estimate of some integrals and for communicating the results before publication.

\appendix

\section{Mean field Action}
\label{app:meanfield}
In the present Appendix we evaluate the mean field action presented in Section \ref{sec:saddle}. Let us start observing that, in the replica symmetric hypothesis, Eq.~\eqref{Smf} becomes
\begin{multline}
n\beta S_{\text{mf}}[\beta,Q,\hat Q]=\sum_{r\geq 1} \binom{n}{r}Q_{r}\, \hat{Q}_{r}
-\frac{1}{2} \sum_{r\geq 1} \binom{n}{r}g_{r}(0)\, Q_{r}^2\\
-\ln\left[\prod_{a=1}^n\int_0^{2\pi}\frac{e^{i\lambda^a}\,d\lambda^a}{2\pi} \exp\left(\sum_{r\geq 1}\hat{Q}_{r}\ \sum_{\mathclap{a_1<\dots<a_r}} e^{-i \sum_{l=1}^{r} \lambda^{a_l}}\right)\right].
\end{multline}
Let us now work out the $n\to 0$ limit. The result of this limit is presented already in the seminal work by \textcite{Mezard1985}. However, some intermediate, nontrivial steps are missing in their exposition and therefore we present here a more detailed derivation. We start observing that
\begin{equation}
\sum_{\mathclap{a_1<\dots<a_l}}e^{-i\sum_{j=1}^r\lambda^{a_j}}=\frac{1}{r!}\left(\sum_{a}e^{-i\lambda^a}\right)^r.
\end{equation}
It follows that
\begin{multline}
\exp\left(\sum_{r\geq 1}\hat{Q}_{r}\ \sum_{\mathclap{a_1<\dots<a_r}} e^{-i \sum_{l=1}^{r} \lambda^{a_l}}\right)\\
=\exp\left[\sum_{r\geq 1}\frac{\hat{Q}_{r}}{r!}\left(\sum_ae^{-i\lambda^a}\right)^r\right]\\
=\iint_{-\infty}^{+\infty}\exp\left[i\eta\left(x-\sum_ae^{-i\lambda^a}\right)+\sum_{r\geq 1}\frac{\hat{Q}_{r}x^r}{r!}\right]\,dx\,d\eta.
\end{multline}
The dependence on $\{\lambda^a\}_a$ factorizes and therefore we can calculate, for each value of $a$,
\begin{equation}
\int_0^{2\pi}\frac{\,d\lambda^a}{2\pi}\exp\left(i\lambda^a-i\eta e^{-i\lambda^a}\right)=i\int_\gamma \frac{e^{-i\eta z}}{z^2}\frac{dz}{2\pi}=-i\eta,
\end{equation}
where $\gamma$ is the anticlockwise oriented unit circle in the complex plane. We have
\begin{multline}
\left[\prod_{a=1}^n\int_0^{2\pi}\frac{e^{i\lambda^a}\,d\lambda^a}{2\pi}\right]\exp\left(\sum_{r\geq 1}\hat{Q}_{r}\ \sum_{\mathclap{a_1<\dots<a_r}} e^{-i \sum_{l=1}^{r} \lambda^{a_l}}\right)\\
=\iint_{-\infty}^{+\infty}(-i\eta)^n\exp\left[i\eta x+\sum_{r\geq 1}\frac{\hat{Q}_{r}x^r}{r!}\right]\,dx\,d\eta\\
=\left.\frac{d^n}{d x^n}\exp\left(\sum_{r=1}^n\frac{\hat{Q}_{r}x^r}{r!}\right)\right|_{x=0}.
\end{multline}
In the $n\to 0$ limit,
\begin{multline}
\iint_{-\infty}^{+\infty}(-i\eta)^n\exp\left[i\eta x+\sum_{r\geq 1}\frac{\hat{Q}_{r}x^r}{r!}\right]\,dx\,d\eta\\
=1+n\iint_{-\infty}^{+\infty}\ln(-i\eta)\exp\left[i\eta x+\sum_{r\geq 1}\frac{\hat{Q}_{r}x^r}{r!}\right]\,dx\,d\eta+o(n).\end{multline}
Using now the integral representation for the logarithm
\begin{equation}
\ln(x)=\int_{0}^\infty\frac{e^{-t}-e^{-xt}}{t}dt,
\end{equation}
we observe that, for a generic function $f(x)$,
\begin{multline}
\iint_{-\infty}^{+\infty}\ln(-i\eta)e^{i\eta x+f(x)}d x\,d\eta\\
=\int_0^{+\infty}\frac{dt}{t}\left[\iint_{-\infty}^{+\infty}d x\,d\eta\left(e^{-t}-e^{i\eta t}\right)e^{i\eta x+f(x)}\right]\\
=\int_0^{+\infty}\frac{e^{f(0)-t}-e^{f(-t)}}{t}dt
=\int_{-\infty}^{\infty}\left[e^{f(0)-e^y}-e^{f(-e^y)}\right]dy.
\end{multline}
Therefore, using Eq.~\eqref{G}, we have
\begin{multline}
\iint_{-\infty}^{+\infty}\ln(-i\eta)\exp\left[i\eta x+\sum_{r\geq 1}\frac{\hat{Q}_{r}x^r}{r!}\right]\,dx\,d\eta\\
=\beta\int_{-\infty}^{\infty}\left[e^{-e^{\beta y}}-e^{-G(y)}\right]dy.
\end{multline}
The other terms appearing in the mean field action can be evaluated on the saddle point using Eq.~\eqref{saddle1} and the fact that
\begin{equation}
\binom{n}{r}=\frac{(-1)^{r-1}n}{r}+o(n).
\end{equation}
In particular,
\begin{multline}
\sum_{r\geq 1} \frac{(-1)^{r-1}}{r}Q_{r}\hat Q_{r}=\beta\int e^{-G(x)} \sum_{r=1}^\infty \frac{\hat Q_r e^{\beta rx}}{r!}\,d x\\=\beta\int G(x) e^{-G(x)}\,d x
\end{multline}
and similarly
\begin{multline}
\sum_{r\geq 1} \frac{(-1)^{r-1}g_{r}(0)}{r}Q_{r}^2\\
=\beta^2\iint e^{-G(x)-G(y)}\sum_{r\geq 1} \frac{(-1)^{r-1}g_{r}(0)e^{\beta r (x+y)}}{r!(r-1)!}\,dx\,dy\\
=-\beta\iiint e^{-G(x)-G(y)}\rho(w)\frac{\partial J_0\left(2e^{\beta\frac{x+y-w}{2}}\right)}{\partial x}\,dx\,dy\,dw.
\end{multline}
Collecting all contributions, we can finally write the mean field action at finite temperature,
\begin{multline}\label{mffiniteT}
S_{\text{mf}}[\beta,Q,\hat Q]\equiv S_{\text{mf}}[\beta,G]=\int G(x) e^{-G(x)}\,d x \\
+\frac{1}{2} \iiint e^{-G(x)-G(y)}\rho(w)\frac{\partial J_0\left(2e^{\beta\frac{x+y-w}{2}}\right)}{\partial x}\,dx\,dy\,dw\\
-\int_{-\infty}^{\infty}\left[e^{-e^{\beta y}}-e^{-G(y)}\right]dy,
\end{multline}
that has the structure of Eq.~\eqref{smf}. The $\beta\to\infty$ limit of this quantity is immediately obtained using the fact that
\begin{equation}\label{limitJ}
J_0\left(2\exp\left(\frac{\beta x}{2}\right)\right)-1\xrightarrow{\beta\to\infty}-\theta(x),
\end{equation}
and therefore we have
\begin{multline}\label{smfzero}
\lim_{\beta\to\infty}S_{\text{mf}}[\beta,Q,\hat Q]=\int G(x) e^{-G(x)}\,d x \\
-\frac{1}{2} \iint e^{-G(x)-G(w-x)}\rho(w)\,dx\,dw\\
-\int_{-\infty}^{\infty}\left[\theta(-x)-e^{-G(y)}\right]dy.
\end{multline}
The mean field approximation to the optimal cost is obtained substituting  in the previous equation the mean field solution for $G(x)$, given by Eq.~\eqref{EqGmf}.

\section{Derivation of the polygonal corrections}\label{app:polygons}
To derive Eq.~\eqref{eserie}, we proceed, as anticipated, following the strategy of \textcite{DeAlmeida1978}. An eigenvector $\mathsf c=(c_\Ba)_\Ba$ of the matrix $\mathsf T$ must satisfy the equation
\begin{equation}
\sum_{\Bb} T_{\Ba\Bb}c_{\Bb}=\sum_{\Bb\colon\Ba\cap\Bb=\emptyset} Q_{\abs{\Ba}+\abs{\Bb}}\sqrt{g_{\abs{\Ba}}(k)g_{\abs{\Bb}}(k)}c_\Bb=\lambda c_\Ba.
\end{equation}
We will look for eigenvectors $\mathsf c^q$ with $q$ distinguished replicas, in the form
\begin{equation}
c_\Ba^q=\begin{cases}0&\text{if $\abs{\Ba}<q$},\\ d^i_{\abs{\Ba}}&\parbox[t]{.3\textwidth}{if $\Ba$ contains $q-i+1$ different indexes, $i=1,\dots, q+1$.}\end{cases}
\end{equation}
For $q\geq 2$, if we consider $q-1$ distinguished replicas, it can be proved \cite{Mezard1987} that the following orthogonality condition holds:
\begin{equation}
\sum_{k=0}^{q-j}\binom{k}{q-j}\binom{\abs{\Ba}-(k+j)}{n-q}d^{q+1-(k+j)}_{\abs{\Ba}}=0.
\end{equation}
The orthogonality condition provides a relation between all the different values $d^i_\abs{\Ba}$, showing that we can keep one value only, say $d^1_{\abs{\Ba}}$, as independent. Using this assumption, the eigenvalues of the original $\mathsf T(k)$ matrix can be evaluated diagonalizing the infinite dimensional matrices $\mathsf N^{(q)}(k)$ \cite{Parisi2002} whose elements, in the $n\to 0$ limit, are given by
\begin{equation}
N^{(q)}_{ab}(k)=(-1)^b\frac{\Gamma(a+b)\Gamma(b)Q_{a+b}\sqrt{g_{a}(k)g_{b}(k)}}{\Gamma(a)\Gamma(b-q+1)\Gamma(b+q)}.
\end{equation}
In particular, for $q=0$ a direct computation gives
\begin{multline}
N^{(0)}_{ab}(k)=\binom{n-a}{b}Q_{a+b}g_{b}(k)\\ \xrightarrow{n\to 0}(-1)^b\frac{\Gamma(a+b)}{\Gamma(a)b!}Q_{a+b}\sqrt{g_{a}(k)g_{b}(k)}\label{m0}
\end{multline}
whereas for $q=1$ we obtain
\begin{multline}
N^{(1)}_{ab}(k)=\binom{n-a}{b}\frac{b}{b-n}Q_{a+b}\sqrt{g_{a}(k)g_{b}(k)}\\\xrightarrow{n\to 0}N^{(0)}_{ab}(k)+\frac{n}{b}N^{(0)}_{ab}+o(n).\label{m1}
\end{multline}
Summarizing, we can write
\begin{equation}\label{SqN}
\tr{\mathsf T^E(k)}=\sum_{q=0}^\infty\left[\binom{n}{q}-\binom{n}{q-1}\right]\tr{\left(\mathsf N^{(q)}(k)\right)^E}.
\end{equation}
We distinguish now the sectors $q\geq 2$ from the sectors $q=0,1$, due to the fact that the two sets requires a different analytic treatment.
\paragraph{Sectors $q\geq 2$}
Computing the spectrum of the matrix $\mathsf N^{(q)}$ for $q\geq 2$ is equivalent to the computation of the spectrum of $\mathsf M^{(q)}(k)$, that has elements
\begin{multline}
M^{(q)}_{ab}(k)\coloneqq\\=
(-1)^{a+b}\sqrt{\frac{g_{b+q}(k)}{g_{a+q}(k)}}\frac{\Gamma(a+1)\Gamma(b+q)}{\Gamma(b+1)\Gamma(a+q)}N^{(q)}_{b+q\ a+q}(k)\\
=(-1)^{a+q}\frac{\Gamma(a+b+2q)}{\Gamma(a+2q)b!}Q_{a+b+2q}g_{b+q}(k).\end{multline}
The eigenvalue equation for $\mathsf M^{(q)}(k)$ has the form
\begin{multline}\label{Aphi}
\lambda c_{a}^{(q)}=\sum_{b=1}^\infty M^{(q)}_{ab}(k)c_{b}^{(q)}\\
=\beta(-1)^q\int\frac{(-1)^a e^{(a+q)\beta u}}{\Gamma(a+2q)}\phi^{(q)}(u;k)\,d u,
\end{multline}
where we have introduced
\begin{equation}\label{phiq}
\phi^{(q)}(u;k)\coloneqq\sum_{b=1}^\infty\frac{ e^{(b+q)\beta u-\frac{G(u)}{2}}}{b!}c_{b}^{(q)}g_{b+q}(k).
\end{equation}
Eq.~\eqref{Aphi} can be written as
\begin{equation}
\lambda\phi^{(q)}(u;k)=(-1)^{q}\int [\mathsf A^{(q)}(k)]_{uv}\phi^{(q)}(v;k)\,d v,\end{equation}
where $\mathsf A^{(q)}(k)$ is the operator
\begin{multline}
\label{Aq}[\mathsf A^{(q)}(k)]_{uv}\coloneqq\\
=\beta e^{-\frac{G(u)+G(v)}{2}+q\beta (u+v)}\sum_{a=1}^\infty\frac{(-1)^{a}e^{a\beta (u+v)}}{\Gamma(a+2q)a!}g_{a+q}(k).
\end{multline}
In the $n\to 0$ limit, from Eq.~\eqref{SqN} we have therefore
\begin{multline}\sum_{q=2}^\infty\left[\binom{n}{q}-\binom{n}{q-1}\right]\tr{\left(\mathsf N^{(q)}(k)\right)^E}\\
=\sum_{q=2}^\infty(-1)^{qE}\left[\binom{n}{q}-\binom{n}{q-1}\right]\tr{\left(\mathsf A^{(q)}(k)\right)^E}\\
\xrightarrow{n\to 0}n\sum_{q=2}^\infty(-1)^{q(E+1)}\frac{2q-1}{q(1-q)}\tr{\left(\mathsf A^{(q)}(k)\right)^E}\\
=n\sum_{q=1}^\infty\frac{4q-1}{2q(1-2q)}\tr{\left(\mathsf A^{(2q)}(k)\right)^E}\\
+(-1)^{E}n\sum_{q=1}^\infty\frac{4q+1}{2q(2q+1)}\tr{\left(\mathsf A^{(2q+1)}(k)\right)^E}.\label{sectors}
\end{multline}
Here we have used the fact that
\begin{equation}\label{limbin}
\binom{n}{q}\!-\!\binom{n}{q-1}\xrightarrow{n\to 0}\begin{cases}1&\text{if $q=0$},\\
-1+n&\text{if $q=1$},\\
n(-1)^q\frac{2q-1}{q(1-q)}&\text{if $q>1$.}\end{cases}
\end{equation}

\begin{figure}[t]
\includegraphics{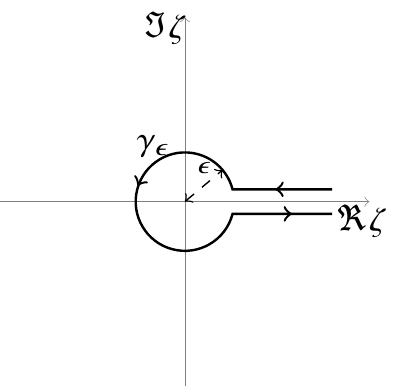}
\caption{Hankel path in the complex plane.}\label{Hankel}
\end{figure}

\paragraph{Sectors $q=0$ and $q=1$}
Let us now evaluate the contributions of the sectors $q=0$ and $q=1$. We have
\begin{multline}
\sum_{q=0}^1\left[\binom{n}{q}-\binom{n}{q-1}\right]\tr{\left(\mathsf N^{(q)}(k)\right)^E}\\
=\tr{\left(\mathsf N^{(0)}(k)\right)^E}+(n-1)\tr{\left(\mathsf N^{(1)}(k)\right)^E}+o(n).
\end{multline}
To evaluate the traces appearing in the previous expression, we define the operator $\mathsf M^{(0)}(k)$,
\begin{subequations}
\begin{multline}
M^{(0)}_{ab}(k)\coloneqq (-1)^{a+b}\sqrt{\frac{g_{b}(k)}{g_{a}(k)}}\frac{a}{b}N^{(0)}_{ba}(k)\\
=(-1)^{a}\frac{\Gamma(a+b)}{\Gamma(a)b!}Q_{a+b}g_{b}(k),\end{multline}
and the operator $\tilde{\mathsf M}^{(1)}(k)$,
\begin{multline}\tilde M^{(1)}_{ab}(k)\coloneqq(-1)^{a+b}\sqrt{\frac{g_{b}(k)}{g_{a}(k)}}\frac{a}{b}N_{ba}^{(1)}\\
=(-1)^{a}\frac{\Gamma(a+b)}{\Gamma(a)b!}Q_{a+b}g_{b}(k)\\
+n(-1)^{a}\frac{\Gamma(a+b)}{\Gamma(a+1)b!}Q_{a+b}g_{b}(k).\end{multline}
\end{subequations}
Repeating the considerations presented for the $q\geq 2$ case, we can introduce the operator $\mathsf A^{(0)}(k)$ as follows
\begin{multline}
[\mathsf A^{(0)}(k)]_{uv}=\beta e^{-\frac{G(u)+G(v)}{2}}\sum_{a=0}^\infty\frac{(-1)^ae^{a\beta(u+v)}g_a(k)}{\Gamma(a)a!}\\
=\Omega_de^{-\frac{G(u)+G(v)}{2}}\\
\times\int_0^\infty z^{d-1} \pFq{0}{1}{-}{\frac{d}{2}}{-\frac{k^2z^2}{4}}\left.\frac{\partial J_{0}\left(2e^{\beta\frac{y}{2}}\right)}{\partial y}\right|_{y=u+v-z^p}\,d z,
\end{multline}
having the same eigenvalues of $\mathsf M^{(0)}(k)$, in such a way that
\begin{equation}
\tr{\left(\mathsf N^{(0)}(k)\right)^E}=\tr{\left(\mathsf M^{(0)}(k)\right)^E}=\tr{\left(\mathsf A^{(0)}(k)\right)^E}.
\end{equation}
Similarly, we have that the eigenvalues of $\tilde{\mathsf M}^{(1)}(k)$ are obtained from
\begin{multline}
\lambda \tilde c_a=\sum_{b}\tilde{M}^{(1)}_{ab}(k)\tilde c_{b}\\
=\sum_{r'}(-1)^a\frac{\Gamma(a+b) Q_{a+b}g_b(k)}{\Gamma(a)b!}\left(1+\frac{n}{a}\right)\tilde c_{b}\\
=\int\frac{e^{a u\beta-\frac{G(u)}{2}}}{\Gamma(a)}\left(1+\frac{n}{a}\right)\tilde\phi(u;k)\,d u,
\end{multline}
where $\tilde\phi(u;k)$ is given by
\begin{equation}\label{phi0}
\tilde \phi(u;k)\coloneqq\sum_{b=1}^\infty\frac{ e^{b\beta u-\frac{G(u)}{2}}}{b!}\tilde c_{b}g_{b}(k).
\end{equation}
It is natural, therefore, to introduce the operator $\tilde{\mathsf A}^{(1)}(k)$ defined as follows
\begin{multline}
[\tilde{\mathsf A}^{(1)}(k)]_{uv}\coloneqq\\
=\beta e^{-\frac{G(u)+G(v)}{2}}\sum_{a=1}^\infty\frac{(-1)^{a}e^{a\beta (u+v)}}{\Gamma(a)a!}g_{a}(k)\left(1+\frac{n}{a}\right)\\
=[\mathsf A^{(0)}(k)]_{uv}+n [\mathsf B(k)]_{uv}.
\end{multline}
The operator $\mathsf B(k)$ introduced above is
\begin{multline}
[\mathsf B(k)]_{uv}\coloneqq\beta e^{-\frac{G(u)+G(v)}{2}}\sum_{a=1}^\infty\frac{(-1)^{a}e^{a\beta (u+v)}}{\Gamma(a+1)a!}g_{a}(k)\\
=\Omega_d\beta e^{-\frac{G(u)+G(v)}{2}}\\
\times\int_0^\infty z^{d-1} \pFq{0}{1}{-}{\frac{d}{2}}{-\frac{k^2z^2}{4}}\left[J_{0}\left(2e^{\beta\frac{u+v-z^p}{2}}\right)-1\right]\,d z.
\end{multline}
We have then, up to higher orders in $n$,
\begin{multline}
\tr{\left(\mathsf N^{(0)}(k)\right)^E}+(n-1)\tr{\left(\mathsf N^{(1)}(k)\right)^E}\\
=\tr{\left(\mathsf A^{(0)}(k)\right)^E}+(n\!-\!1)\tr{\left(\mathsf A^{(0)}(k)+n\mathsf B(k)\right)^E}\\
=n\tr{\left(\mathsf A^{(0)}(k)\right)^E}+nE\tr{\left(\mathsf A^{(0)}(k)\right)^{E-1}\mathsf B(k)}.
\end{multline}

\paragraph{Zero-temperature limit} For each one of the quantities above, we need to calculate the $\beta\to\infty$ limit, being interested in the optimal cost. Let us consider the $q\geq 2$ contribution. First, we introduce the identity
\begin{equation}\label{identita}
\sum_{r=1}^\infty\frac{(-x)^{r}}{\Gamma(r+2q)r!}=\frac{i}{2\pi}\oint_{\gamma_\epsilon} e^{-\zeta-2q\ln(-\zeta)+\frac{x}{\zeta}}\,d \zeta.
\end{equation}
The path $\gamma_\epsilon$, in the complex plane, is the Hankel path, represented in Fig.~\ref{Hankel}. This identity can be proved starting from the Hankel representation for the reciprocal gamma function \cite{abramowitz1972handbook}
\begin{equation}
\frac{1}{\Gamma(z)}=\frac{i}{2\pi}\oint_{\gamma_\epsilon}e^{-\zeta-z\ln(-\zeta)}\,d\zeta.
\end{equation}
Using Eq.~\eqref{identita}, we can rewrite Eq.~\eqref{Aq} for $q\geq 2$ as
\begin{widetext}
\begin{equation}\label{As}
[\mathsf A^{(q)}(k)]_{uv}=\frac{i\beta \Omega_d}{2\pi} e^{-\frac{G(u)+G(v)}{2}}\int_0^{+\infty}\,d w\oint_{\gamma_\epsilon}\,d\zeta \frac{w^{\frac{d}{p}-1}}{p}\pFq{0}{1}{-}{\frac{d}{2}}{-\frac{k^2w^\frac{2}{p}}{4}}\exp\left(\beta q(u+v-w)-w-2q\ln(-\zeta)+\frac{e^{\beta(u+v-w)}}{\zeta}\right).\end{equation}\end{widetext}
To compute the $\beta\to\infty$ limit, we perform a saddle point approximation, obtaining
\begin{equation}
\begin{cases}
\zeta_\text{sp}=-q,\\
w_\text{sp}=u+v-\frac{2\ln q}{\beta}.
\end{cases}\end{equation}
The saddle point has fixed position assuming that $\ln q=t\beta$ for some $t$. Taking instead $q$ fixed and $\beta\to\infty$, it is easily seen from Eq.~\eqref{Aq} that
\begin{equation}
\lim_{\beta\to\infty}[\mathsf A^{(q)}(k)]_{uv}=\begin{cases}\infty&\text{for $u+v>0$},\\0&\text{for $u+v<0$}.\end{cases}
\end{equation}
Indeed, only for $u+v-2t>0$ the saddle point is inside the range of integration. For this reason, we take $\frac{\ln q}{\beta}=t$ fixed, obtaining the limit operator $\mathsf H(t,k)$,
\begin{multline}
[\mathsf H(t,k)]_{uv}\coloneqq\lim_{\substack{\beta\to\infty,\ q\to\infty\\\beta^{-1}\ln q=t}}[\mathsf A^{(q)}(k)]_{uv}\\
\approx \frac{\Omega_d}{p} e^{-\frac{G(u)+G(v)}{2}}\left.x^{\frac{d}{p}-1}\pFq{0}{1}{-}{\frac{d}{2}}{-\frac{k^2x^\frac{2}{p}}{4}}\theta\left(x\right)\right|_{x=u+v-2t}.\label{Hkt}
\end{multline}
Observing that $\sum_{q=2}^\infty\frac{1}{\beta q}\rightarrow \int_0^{+\infty}\,d t$ the contribution to the (rescaled) average optimal cost from the $q\geq 2$ sectors is
\begin{equation}
\sE_E^{(2+)}\coloneqq\begin{cases}
\frac{\Omega_d}{E(2\pi)^d}\iint_0^\infty k^{d-1}\tr{\mathsf H^E(t,k)}\,d t\,d k&\text{$E$ odd},\\
0&\text{$E$ even.}
\end{cases}
\label{ee}
\end{equation}

For the sectors $q=0$ and $q=1$ the $\beta\to\infty$ limit can be performed quite straightforwardly. In particular, using Eq.~\eqref{limitJ}, we obtain the limit operators $\mathsf H(0,k)$,
\begin{subequations}
\begin{multline}
[\mathsf A^{(0)}(k)]_{uv}\xrightarrow{\beta\to\infty}-[\mathsf H(0,k)]_{uv}\equiv\\
-\Omega_de^{-\frac{G(u)+G(v)}{2}}\left.\frac{x^{\frac{d}{p}-1}}{p}\pFq{0}{1}{-}{\frac{d}{2}}{-\frac{k^2x^{\frac{2}{p}}}{4}}\theta(x)\right|_{x=u+v},\end{multline}
and the operator $\mathsf K(k)$,
\begin{multline}
[\mathsf B(k)]_{uv}\xrightarrow{\beta\to\infty}-\beta[\mathsf K(k)]_{uv}\coloneqq\\
=-\Omega_d\beta e^{-\frac{G(u)+G(v)}{2}}\left.\frac{x^\frac{d}{p}}{d}\pFq{0}{1}{-}{\frac{d}{2}+1}{-\frac{k^2x^{\frac{2}{p}}}{4}}\theta(x)\right|_{x=u+v}
\end{multline}
\end{subequations}
The contribution to the (rescaled) average optimal cost from the sectors $q=0$ and $q=1$ is
\begin{equation}
\sE_E^{(01)}\coloneqq (-1)^E\frac{\Omega_d}{2(2\pi)^d}\int_0^\infty k^{d-1}\tr{\mathsf H^{E-1}(0,k){\mathsf K}(k)}\,d k.
\end{equation}
Collecting the results above, Eq.~\eqref{eserie} is immediately obtained.

\section{The triangular contribution}
\label{app:triangular}
As stressed above, in Ref.~\cite{Mezard1988} only the contribution for $E=3$ was considered and discussed. For the sake of completeness, we present here the explicit computation of this contribution, starting from our formalism and specifying all details of the computation. We will show also that the expression in Eq.~\eqref{eserie} for $E=3$ is recovered from the classical result. The triangular contribution corresponds to one graph only, i.e., the triangular graph $\fK_3$. We proceed in the replica symmetric hypothesis. We observe that in this case Eq.~\eqref{psi2} becomes
\begin{multline}
-n \beta S_{3}[\beta,Q]=\\=\sum_{r_1,r_2,r_3}\frac{n!\overline{\prod_{e=1}^3 e^{-\beta r_e w_e}}^{\,\fK_3}}{6(n-r_1-r_2-r_3)!}\prod_{i=1}^3\frac{Q_{r_i+r_{i+1}}}{r_i!}\\
=\frac{n}{6}\sum_{r_1,r_2,r_3}\overline{\prod_{e=1}^3 e^{-\beta r_e w_e}}^{\,\fK_3}(r_1+r_2+r_3-1)!\\\times \prod_{i=1}^3\frac{(-1)^{r_i-1}Q_{r_i+r_{i+1}}}{r_i!}+o(n).
\end{multline}
We have used the fact that $\sigma_{\fK_3}=\frac{1}{6}$ and, moreover, there are
\begin{equation}
\frac{n!}{(n-r_1-r_2-r_3)!r_1!r_2!r_3!}
\end{equation}
ways to organize $r_1+r_2+r_3$ different replica indexes in three groups of cardinality $r_1$, $r_2$, $r_3$ respectively. Using Eq.~\eqref{saddle1} we can write the previous expression as
\begin{multline}
-\beta S_{3}[\beta,Q]=\smashoperator[l]{\sum_{r_1,r_2,r_3}}\frac{n!\overline{\prod_{e=1}^3 e^{-\beta r_e w_e}}^{\,\fK_3}}{6(n-r_1-r_2-r_3)!}\prod_{i=1}^3\frac{Q_{r_i+r_{i+1}}}{r_i!}\\
=\frac{1}{6}\left[\prod_{i=1}^3\iint\,dx_i\,d w_i G'(x_i)e^{-G(x_i)}\right]\\\times \rho_{\fK_3}(\{w_i\})K(\{\beta(x_i+x_{i+1}-w_i)\}_i),
\end{multline}
where we have introduced the function
\begin{multline}
K(\{x_i\}_i)=\\
=\sum_{r_1,r_2,r_3}(r_1+r_2+r_3-1)!\prod_{i=1}^3\frac{(-1)^{r_i-1}e^{-r_ix_i}}{(r_i+r_{i+1})!r_i!}
\end{multline}
and $\rho_{\fK_3}$ is given by Eq.~\eqref{rho2} in the form
\begin{equation}
\rho_{\fK_3}(\{w_e\})=\left[\prod_{e=1}^3\int_{\R^d}\,d^dz_e \delta\left(w_e-\norm{\mathbf z_e}^p\right)\right]\delta\left(\sum_{e=1}^3\mathbf z_e\right).
\end{equation}
Using the expression in Eq.~\eqref{mffiniteT}, we can write the action, in the triangular approximation, at finite temperature,
\begin{multline}\label{S3}
S_\text{mf}[\beta,G]+S_3[\beta,G]=\int G(x) e^{-G(x)}\,d x \\
+\frac{1}{2} \iiint e^{-G(x)-G(y)}\rho(w)\frac{\partial J_0\left(2e^{\beta\frac{x+y-w}{2}}\right)}{\partial x}\,dx\,dy\,dw\\
-\int_{-\infty}^{\infty}\left[e^{-e^{\beta y}}-e^{-G(y)}\right]dy\\-\frac{1}{6\beta}\left[\prod_{i=1}^3\iint\,dx_i\,d w_i G'(x_i)e^{-G(x_i)}\right]\\\times \rho_{\fK_3}(\{w_i\})K(\{\beta(x_i+x_{i+1}-w_i)\}_i).
\end{multline}
We have to evaluate the $\beta\to\infty$ limit. Using the identity
\begin{multline}
\frac{1}{\Gamma\left(p_1+p_2+1\right)}=\\=\frac{1}{\Gamma\left(p_1+{1\over 2}\right)\Gamma\left(p_2+{1\over 2}\right)}\int_0^1 x^{p_1-{1\over 2}}(1-x)^{p_2-{1\over 2}}\,d x,
\end{multline}
we can write
\begin{multline}
\frac{K(\{\beta x_i\}_i)}{\beta}
=-\int_{-\infty}^{+\infty} e^{- e^{-\beta w}}\left[\prod_{i=1}^{3}\int_0^1\frac{\,d u_i}{\sqrt{u_i(1-u_i)}}\right]\\\times\left[\sum_{p=1}^\infty\frac{\left(-u_i u_{i-1} e^{\beta (x_i-w)}\right)^p}{p!\Gamma^2\left(p+{1\over 2}\right)}\right]\,d w\\
\xrightarrow{\beta\to\infty}\frac{1}{\pi^3}\left[\int_0^1\frac{\,d u}{\sqrt{u(1-u)}}\right]^3\int_{0}^{+\infty}\prod_{i=1}^{3}\theta(x_i)\theta(x_i-w)\,d w\\
=\min_i(\{x_i\})\theta(x_1)\theta(x_2)\theta(x_3).
\end{multline}
To perform the last limit, we have used the fact that
\begin{equation}
\lim_{\beta\to\infty}\sum_{p=1}^\infty\frac{\left(-z e^{\beta x}\right)^p}{p!\Gamma^2\left(p+{1\over 2}\right)}=-\frac{\theta(x)}{\pi}.
\label{bessellimit}
\end{equation}
This property can be obtained applying the following
\begin{proposition}
Let $f(p)$ be an holomorphic function in the semiplane $\Re(p)>-\epsilon$, for some $\epsilon\in(0,1)$. Moreover assume that $\abs{f(p)}e^{-\pi\abs{p}}\leq M\abs{p}^{-k}$ with $k>1$ as $\Im(p)\to \pm\infty$. Then the following identity holds,
 \begin{equation}
 \lim_{x\to +\infty}\sum_{p=0}^\infty f(p)(-x)^p=0.
\label{magicformula}
\end{equation}
\end{proposition}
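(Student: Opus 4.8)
The plan is to prove the Proposition by a Mellin--Barnes representation that replaces the sum over the nonnegative integers with a single vertical-line integral whose magnitude is governed by the factor $x^{\Re s}$. The starting observation is that the kernel $\pi/\sin(\pi s)$ has a simple pole at every $s=p\in\{0,1,2,\dots\}$ with $\mathrm{Res}_{s=p}\bigl[\pi/\sin(\pi s)\bigr]=(-1)^p$, so that
\begin{equation}
\mathrm{Res}_{s=p}\frac{\pi\,f(s)\,x^{s}}{\sin(\pi s)}=(-1)^p f(p)\,x^p=f(p)(-x)^p .
\end{equation}
I would then establish that, for any fixed $c$ with $-\epsilon<c<0$,
\begin{equation}\label{MB}
\sum_{p=0}^\infty f(p)(-x)^p=-\frac{1}{2i}\int_{c-i\infty}^{c+i\infty}\frac{f(s)\,x^{s}}{\sin(\pi s)}\,d s .
\end{equation}
Holomorphy of $f$ in $\Re s>-\epsilon$ ensures that the only poles of the integrand to the right of the line $\Re s=c$ are exactly the integers $0,1,2,\dots$, which is what makes \eqref{MB} plausible.

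To prove \eqref{MB} I would apply the residue theorem on the rectangle with vertical sides $\Re s=c$ and $\Re s=N+\tfrac12$ and horizontal sides $\Im s=\pm T$: the poles it encloses are $s=0,\dots,N$, and the sum of their residues is the partial sum $\sum_{p=0}^N f(p)(-x)^p$. Sending $T\to\infty$, the two horizontal segments drop out, because there $\abs{\sin(\pi s)}\asymp e^{\pi\abs{\Im s}}$ whereas $\abs{f(s)}\le M\abs{s}^{-k}e^{\pi\abs{s}}$ and $\abs{x^{s}}$ is bounded on $c\le\Re s\le N+\tfrac12$, so the integrand decays like $\abs{s}^{-k}$ with $k>1$; sending then $N\to\infty$, the vertical segment at $\Re s=N+\tfrac12$ (on which $\abs{\sin(\pi s)}=\cosh(\pi\Im s)\ge1$) has to tend to $0$, and \eqref{MB} follows, the overall sign being fixed by the orientation of the rectangle. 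Once \eqref{MB} is available the statement is immediate: on $\Re s=c$ one has $\abs{x^{s}}=x^{c}$, and since $\abs{\sin\pi(c+it)}\sim\tfrac12 e^{\pi\abs{t}}$ the hypothesis gives $\abs{f(c+it)}/\abs{\sin\pi(c+it)}=O\bigl((1+\abs{t})^{-k}\bigr)$, which is integrable because $k>1$; hence
\begin{equation}
\Biggl|\,\sum_{p=0}^\infty f(p)(-x)^p\Biggr|\le \frac{x^{c}}{2}\int_{-\infty}^{+\infty}\frac{\abs{f(c+it)}}{\abs{\sin\pi(c+it)}}\,d t=C(c)\,x^{c}\longrightarrow 0\qquad(x\to+\infty),
\end{equation}
because $c<0$.

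The one genuinely delicate step is the vanishing, as $N\to\infty$, of the far-right vertical integral over $\Re s=N+\tfrac12$, i.e.\ the right to push the contour arbitrarily far to the right; this is the place where the hypotheses are used at full strength. The exponential type of $f$ is no larger than $\pi$, the residues of $1/\sin(\pi s)$ are $O(1)$, and the extra polynomial gain $\abs{s}^{-k}$ with $k>1$ makes the series of residues converge, so that the deformation is legitimate; the control of $f$ along $\Re s=N+\tfrac12$ that is needed for this can be obtained from a Phragm\'en--Lindel\"of estimate in the half-strip $-\epsilon<\Re s<N+\tfrac12$ based on the assumed decay of $f$ on vertical lines. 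In the cases actually needed here --- for instance $f(p)=1/[\,p!\,\Gamma(p+\tfrac12)^2\,]$, relevant to Eq.~\eqref{bessellimit}, or $f(p)=1/[\,\Gamma(p+2q)\,\Gamma(p+1)\,]$ --- the reciprocal Gamma factors already make $f$ decay faster than any exponential along $\Re s=N+\tfrac12$, so that the far-right piece is killed trivially for each fixed $x$. An equivalent route, closer in spirit to the rest of the paper, is to avoid closing on the right altogether and instead collapse the line $\Re s=c$ onto a Hankel path around $[0,+\infty)$, exactly as the reciprocal-Gamma identity in Eq.~\eqref{identita} is used, turning $\sum_p f(p)(-x)^p$ into a loop integral to which Watson's lemma / a saddle-point estimate applies as $x\to+\infty$.
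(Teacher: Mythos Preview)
Your proof is correct and follows essentially the same route as the paper: represent the series as a contour integral of $f(s)\,x^{s}/\sin(\pi s)$, move the contour to a vertical line $\Re s=c\in(-\epsilon,0)$, and bound by $x^{c}$ times a finite integral. The paper simply asserts the Hankel-path representation and straightens it to the line $\Re\zeta=-\epsilon$, whereas you justify the same identity more carefully via rectangular contours; the ``delicate step'' you flag (vanishing of the far-right segment) is exactly what the paper's use of the Hankel path leaves implicit.
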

\begin{proof}
The series in Eq.~\eqref{magicformula} admits a representation as an integral over the Hankel path $\gamma_\epsilon$ in the complex plane, see Fig.~\ref{Hankel}, with $\epsilon\in(0,1)$:
\begin{multline}
\sum_{p=0}^\infty f(p)(-x)^p=\frac{1}{2i}\int_{\gamma_{\epsilon}} \frac{f(\zeta)x^{\zeta}}{\sin(\pi\zeta)}\,d \zeta\\
=\frac{x^{-\epsilon}}{2i}\int^{+\infty}_{-\infty} \frac{f(-\epsilon+iy)x^{iy}}{\sin\left[\pi (i y-\epsilon)\right]}\,dy,
\end{multline}
where in the second equality we have deformed the path to the vertical line $\Re(\zeta)=-\epsilon$. It follows that
\begin{equation}
\left|\sum_{p=0}^\infty f(p)(-x)^p\right| \leq x^{-\epsilon}\int^{+\infty}_{-\infty}\left|\frac{f(-\epsilon+iy)}{2\sin\left[\pi (i y-\epsilon)\right]}\right|\,dy,
\end{equation}
Given the assumptions on $f(p)$, the last integral is convergent, and the thesis follows taking the limit $x\to+\infty$.
\end{proof}
Noticeably, Eq.~\eqref{magicformula} implies
\begin{equation}
 \lim_{x\to +\infty}\sum_{p=1}^\infty f(p)(-x)^p=-f(0)
\end{equation}
from which Eq.~\eqref{bessellimit} follows immediately.

Combining the results above with the expression for the mean field action in Eq.~\eqref{smfzero}, we obtain the saddle point action in the triangular approximation and in the zero temperature limit, $\sE_\triangle\coloneqq \sE_{\text{mf}}+\sE_3$, where
\begin{multline}\label{azionetriangolare}
\sE_3=-\frac{1}{6}\left[\prod_{i=1}^3\iint\,d w_{i}\,d x_i\,G'(x_i)e^{-G(x_i)}\right]\\\times\rho_{\fK_3}(\{w_e\}_e)\min_i(\{x_{i}+x_{i+1}-w_{i}\}_{i})\prod_{i=1}^3\theta(x_{i}+x_{i+1}-w_{i}).
\end{multline}
The value of the average optimal cost can be obtained using for $G(x)$ the solution of the saddle point equation Eq.~\eqref{EqGmf}, or the solution of the saddle point equation obtained from the action $\sE_\triangle$, as showed in Ref.~\cite{Mezard1988}.

Eq.~\eqref{azionetriangolare} can be written in a different form. Indeed, expanding again the expression for $\sE_3$ using Eq.~\eqref{rho2} and the relation
\begin{multline}
\min(x_1,x_2,x_3)\theta(x_1)\theta(x_2)\theta(x_3)=\\
=\int_0^\infty  \theta(x_1-t)\theta(x_2-t)\theta(x_3-t)\,dt,
\end{multline}
we can verify that the triangular contribution can be written in terms of the operators $\mathsf H(t,k)$ and $\mathsf K(k)$ introduced in Appendix \ref{app:polygons} as
\begin{widetext}
\begin{multline}\label{triangoloHK}
\sE_3=-\frac{1}{6(2\pi)^d}\int_0^{+\infty}\,dt\int_{\R^d}\,d^d k\left[\prod_{i=1}^3\iint\,d^d z_i\,d x_i\,G'(x_i)e^{-G(x_i)+i\mathbf k\cdot\mathbf z_i}\theta(x_i+x_{i+1}-\norm{\mathbf z_i}^p-t)\right]\\
=-\frac{2\Omega_d}{3(2\pi)^d}\iint_0^{+\infty} k^{d-1}\tr{\mathsf H^3(t,k)}\,dt\,d k+\frac{\Omega_d}{(2\pi)^d}\iint_0^{+\infty} k^{d-1}\tr{\mathsf H^3(t,k)}\,dt\,d k
-\frac{\Omega_d}{2(2\pi)^d}\int_{\R^d}k^{d-1}\tr{\mathsf H^2(0,k)\mathsf K(k)}\,d^dk\\
=\frac{\Omega_d}{3(2\pi)^d}\iint_0^{+\infty} k^{d-1}\tr{\mathsf H^3(t,k)}\,dt\,d k
-\frac{\Omega_d}{2(2\pi)^d}\int_{\R^d}k^{d-1}\tr{\mathsf H^2(0,k)\mathsf K(k)}\,d^dk.
\end{multline}
\end{widetext}
The second and the third contributions in the second line derive from the fact that, given a set of three numbers $\{a_1,a_2,a_3\}$, the simple identity
\begin{multline}\textstyle
\int_{-\infty}^{\infty}\theta(t)\sum_{k=0}^2\delta'(a_{1+k}-t)\delta(a_{2+k}-t)\theta(a_{3+k}-t)\,dt\\
\textstyle+\int_{-\infty}^{\infty}\theta(t)\sum_{k=0}^2\delta(a_{1+k}-t)\delta'(a_{2+k}-t)\theta(a_{3+k}-t)\,dt=\\
\textstyle=-\int_{-\infty}^{\infty}\theta(t)\sum_{k=0}^2\delta(a_{1+k}-t)\delta(a_{2+k}-t)\delta(a_{3+k}-t)\,dt\\
\textstyle+\sum_{k=0}^2\delta(a_{1+k})\delta(a_{2+k})\theta(a_{3+k})
\end{multline}
holds. We finally have that Eq.~\eqref{triangoloHK} is exactly the contribution appearing in Eq.~\eqref{eserie} for $E=3$.

\section{Diagrammatic rules for $S_\fg$}
\label{app:genSg}
The contribution $S_\fg[\beta,Q]$ for a generic biconnected graph to the action in Eq.~\eqref{azionecompleta} can be written in a quite general form in relation to the topological structure of the graph $\fg$ itself. Let us first observe that, for a given graph $\fg$ with $V$ vertexes and $E$ edges, we can define a \textit{cycle basis} as follows \cite{Berge1973}. Every cycle in the graph can be represented in the space $\mathcal C\subseteq\{0,1\}^E$ by a vector $\fL=(\ell_e)_e$ such that $\ell_e=1$ if the edge $e$ belongs to $\fL$, $\ell_e=0$ otherwise. Remember that in a cycle, each vertex has even degree by definition, and a cycle is called \textit{circuit} if all vertexes have degree equal to two, i.e., a circuit corresponds to a ``loop'' in the nomenclature adopted in the body of the paper \footnote{In graph theory, a loop corresponds to an edge connecting a vertex to itself. This is clearly different from a circuit and from the concept of loop appearing, for example, in Section \ref{sec:polyg}. However circuits are commonly called loops in the physics literature, and we have adopted therefore this nomenclature both in the title and in the main text.}. In the introduced representation we can sum two cycles $\fL_1=(\ell_e^{(1)})_e$ and $\fL_2=(\ell_e^{(2)})_e$, in such a way that $\fL_1\oplus\fL_2=(\ell_e^{(1)}+\ell_e^{(2)}\mod 2)_e\in\mathcal C$. We say that $\mathcal L_\fg$ is a cycle basis for $\fg$ if it is a set of circuits such that every cycle in $\fg$ can be expressed as sum of circuits in $\mathcal L_\fg$, and, moreover, its cardinality $L\coloneqq \abs{\mathcal L_\fg}$ is minimal. The number $L$ is called circuit rank and, for a connected graph, it satisfies the fundamental property \cite{Berge1973}
\begin{equation}
L=E-V+1.
\end{equation}
In a planar graph a basis $\mathcal L_\fg$ can be always easily identified considering, as basis circuits, the faces of the graph. With these definitions in mind, the distribution $\rho_\fg(\{w_e\})$ in Eq.~\eqref{rho} can be written in terms of a cycle basis of the graph $\fg$ as
\begin{multline}\label{rho2}
\rho_\fg(\{w_e\})=\\
=\left[\prod_{e=1}^E\int_{\R^d}\,d^dz_e \delta\left(w_e-\norm{\mathbf z_e}^p\right)\right]\prod_{\fL\in\mathcal L_\fg}\delta\left(\sum_{e\in\fL}\mathbf z_e\right),
\end{multline}
and therefore, denoting by $r_e\coloneqq\abs{\Ba^e},$
\begin{multline}\label{mediak}
\overline{\prod_{e\in \fg} e^{-\beta r_e w_e}}^{\,\fg}=\\
=\left[\prod_{e=1}^E\int_{\R^d}\,d^dz_e \,e^{-\beta r_e\norm{\mathbf z_e}^p}\right]\prod_{\fL\in\mathcal L}\delta\left(\sum_{e\in\fL}\mathbf z_e\right)\\
=\prod_{\fL}\int_{\R^d}\frac{d^dk_\fL}{(2\pi)^d}\prod_{e=1}^E \left[\int_{\R_d}\,d^d\kappa_eg_{r_e}(\kappa_e)\delta\left(\Bk_e-\sum_{\fL\colon e\in\fL}\mathbf k_\fL\right)\right].
\end{multline}
In the equation above we have introduced the function
\begin{multline}
g_{r}(\kappa)\coloneqq\int_{\R^d}e^{i\Bk\cdot\mathbf z-\beta r\norm{z}^p}\,d^dz\\
=\Omega_d\int_0^\infty z^{d-1}e^{-\beta r z^p}\pFq{0}{1}{-}{\frac{d}{2}}{-\frac{\kappa^2z^2}{4}}\,dz.
\end{multline}
Eq.~\eqref{mediak} can be pictorially interpreted as follows. We associate to each circuit $\fL$ of our basis a ``momentum'' $\mathbf k_\fL$ and to each edge of the graph the quantity $g_{r_e}(\kappa_e)$, with the additional constraint that $\Bk_e$ is the algebraic sum of the momenta flowing in the basis circuits to which the edge $e$ belongs (see Fig.~\ref{fig:momenta}).
\begin{figure}\centering
\includegraphics[scale=0.8]{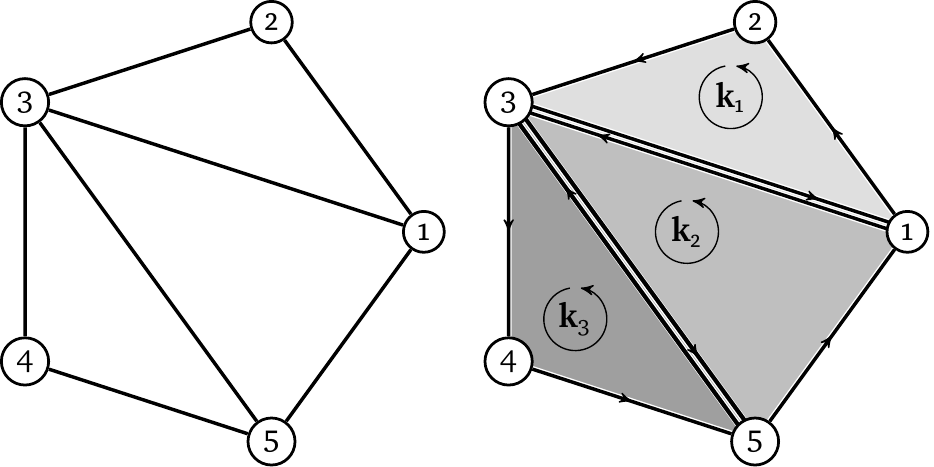}
\caption{\label{fig:momenta} A planar graph and its decomposition in basis circuits.}
\end{figure}
Inserting Eq.~\eqref{mediak} in Eq.~\eqref{psi2} we obtain a new expression depending explicitly on the topology of the graph. We can list a set of diagrammatic rules for the evaluation of $S_\fg$ at finite temperature. In particular, a momentum $\mathbf k_\fL$ must be associated to each basis circuit $\fL$; we must associate a set of replica indexes $\Ba^e$ and a quantity $g_{r_e}(\kappa_e)\delta\left(\Bk_e-\sum_{\fL\colon e\in\fL}\mathbf k_\fL\right)$ to each edge $e$, and a quantity $Q_{\Ba(v)}\delta_{\Ba(v)}$ to each vertex $v$. We must finally sum on all $\{\Ba^e\}_e$ and integrating on all momenta. Observe that the case of polygons is particularly simple, being in this case $L=1$, and therefore Eq.~\eqref{mediak} becomes
\begin{equation}
\overline{\prod_{e\in \fp_E} e^{-\beta r_e w_e}}^{\,\fp_E}=\frac{\Omega_d}{(2\pi)^d}\int \,d k\ k^{d-1}\prod_{e=1}^E g_{r_e}(k).
\end{equation}

\bibliography{biblio.bib}
\end{document}